\tikzset{fontscale/.style = {font=\relsize{#1}}}
\newtheorem{prop}{Proposition}
\newcolumntype{P}[1]{>{\centering\arraybackslash}p{#1}}
\theoremstyle{thmstyleone}%
\theoremstyle{thmstyletwo}%
\theoremstyle{thmstylethree}%
\begin{document}

\title[High-Accuracy and Efficient DV-Hop Localization for IoT Using Hop Loss]{High-Accuracy and Efficient DV-Hop Localization for IoT Using Hop Loss}


\author*[1]{\fnm{Zhengdi} \sur{Shen}}\email{zs267@cornell.edu}

\author[2]{\fnm{Qiran} \sur{Wang}}\email{qw362@nyu.edu}
\equalcont{These authors contributed equally to this work.}

\affil*[1]{\orgdiv{Center for Applied Mathematics}, \orgname{Cornell University}, \orgaddress{\city{Ithaca}, \postcode{14850}, \state{NY}, \country{USA}}}

\affil[2]{\orgdiv{Courant Institue of Mathematical Sciences}, \orgname{New York University}, \orgaddress{\city{New York}, \postcode{10012}, \state{NY}, \country{USA}}}

\abstract{

Accurate localization is critical for Internet of Things (IoT) applications.
Using hop loss in DV-Hop-based algorithms is a promising approach. Nevertheless, challenges lie in overcoming the computational complexity caused by re-calculating the predicted hop-counts, and how to further optimize the modeling for better accuracy.
In this paper, a novel hop loss modeling, {distance-based connectivity consistency} (DCC), is proposed. 
By focusing on the first order connectivity, DCC avoids computing predicted hop-counts, and significantly reduces the time complexity. 
We also provide a proof to theoretically guarantee that this design achieves a full coverage of all hop errors. In addition, by computing a continuous loss function instead of the discrete hop-count errors, DCC further improves the localization accuracy. In the evaluations, DCC demonstrates notable improvements in accuracy over other highly regarded algorithms, and reduces 30\% to 40\% total computation time compared with the baseline algorithm using hop loss.

}

\keywords{
IoT localization, DV-Hop, hop loss, multi-objective optimization.
}

\maketitle
\footnotetext{This manuscript has been submitted to \emph{Cluster Computing} and is under peer review.}
\newpage

\section{Introduction}\label{sec:introduction}

The Internet of Things (IoT) refers to a system of interconnected physical items that utilize sensors, software, and various technologies to facilitate data exchange with other devices and systems through the internet. IoT enables smart devices to interact and gather information autonomously, enhancing efficiency in fields like healthcare, transportation, and industry \cite{aghdam2021role,ushakov2022internet,malik2021industrial,wu2024scalable}.
For IoT, localization algorithms \cite{ghorpade2021survey} play a critical role in determining the physical locations of devices. These algorithms can be divided into two categories:
\begin{enumerate}
    \item Range-based algorithms: These rely on direct measurements such as distance, angle, or time of arrival (e.g., using GPS or signal strength). Examples include Time of Arrival (TOA)\cite{xiong2021maximum}, Angle of Arrival (AOA)\cite{friedrichs2023angle}, and Received Signal Strength Indicator (RSSI)\cite{alanezi2021range} methods, which calculate the location based on precise range measurements between nodes \cite{gresham2004ultra, bavcic2024jy61}.
    \item Range-free algorithms: These do not require exact distance measurements and instead use connectivity or hop-based methods. Examples include DV-Hop \cite{niculescu2003dv09} and Centroid algorithms\cite{gupta2020study,liu2024range}, which estimate the relative location of nodes using information like the number of hops or the network topology, making them more cost-effective and scalable for large IoT networks.
\end{enumerate}
Range-free methods are often preferred for large-scale, cost-sensitive IoT applications where precise ranging is difficult. DV-Hop (Distance Vector-Hop) algorithm \cite{niculescu2003dv09} is a well-known range-free localization method used in IoT applications. It is based on the idea of estimating distances between nodes by counting the number of hops (hop-counts) between them, and then solve the node locations to fit the distance estimation. Fig. \ref{fig:dv_hop_illustration} is an illustration of it. DV-Hop algorithm requires some nodes with known locations in the network, named as anchor nodes. And other nodes, named as unknown nodes, can be localized based on the following key steps:
\begin{enumerate}
    \item Hop-count calculation: each node broadcasts its location to neighboring nodes with distances no larger than a communication radius, and this information is propagated through the network. As the message passes through intermediate nodes, each node increments the hop count, which records the number of hops from one node to another.
    \item Distance estimation: the physical distance between each unknown node and each anchor node is estimated based on the hop-counts in the network.
    \item Location calculation: the unknown node uses multilateration, a geometric method, to estimate its location based on the calculated distances to multiple anchor nodes.
\end{enumerate}

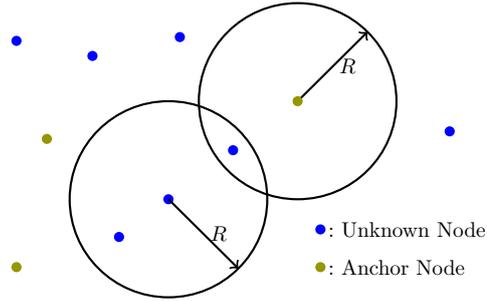
\begin{figure}[h!]
    \centering

    \definecolor{anchor_color}{RGB}{150,150,0}
    \definecolor{unknown_color}{RGB}{0,0,255}
        
    \begin{tikzpicture}[thick,scale=1, every node/.style={scale=0.8}]
        
        \def\numberOfDotsX{3}
        \def\numberOfDotsY{3}

        \pgfmathsetmacro\radius{1.3}
        \pgfmathsetmacro{\anchorx}{1.2};
        \pgfmathsetmacro{\anchory}{0.7};
        \pgfmathsetmacro{\unkx}{-0.5};
        \pgfmathsetmacro{\unky}{-0.6};
        \coordinate (anchor_dot) at (\anchorx, \anchory);
        \coordinate (anchor_edge) at (\anchorx + \radius / 1.414, \anchory + \radius / 1.414);
        \coordinate (unk_dot) at (\unkx, \unky);
        \coordinate (unk_edge) at (\unkx + \radius / 1.414, \unky - \radius / 1.414);

        \def\dotSize{1.5pt} 
        \def\legendDotSize{1.5pt}
        
        \draw[black] (anchor_dot) circle (\radius);
        \node[right, font=\large\bfseries] at (anchor_dot) {};
        \draw[->](anchor_dot) -> (anchor_edge);
        \node at ($(anchor_dot)!0.5!(anchor_edge) + (0.2,0.0)$) {
        $R$};
        \filldraw[anchor_color] (anchor_dot) circle (\dotSize);

        \filldraw[unknown_color] (unk_dot) circle (\dotSize);
        \draw[black] (unk_dot) circle (\radius);
        \draw[->](unk_dot) -> (unk_edge);
        \node at ($(unk_dot)!0.5!(unk_edge) + (0.2,0.0)$) {$R$};
        \node[right, font=\large\bfseries] at (unk_dot) {};

        \filldraw[unknown_color] ($(anchor_dot)!0.5!(unk_dot)$) circle (\dotSize);

        \filldraw[anchor_color] (-2.5, -1.5) circle (\dotSize);
        \filldraw[anchor_color] (-2.1, 0.2) circle (\dotSize);

        \filldraw[unknown_color] (-1.5, 1.3) circle (\dotSize);
        \filldraw[unknown_color] (3.2, 0.3) circle (\dotSize);
        \filldraw[unknown_color] (-1.15, -1.1) circle (\dotSize);
        \filldraw[unknown_color] (-2.5, 1.5) circle (\dotSize);
        \filldraw[unknown_color] (-0.35, 1.55) circle (\dotSize);

    \coordinate (unk_legend_dot) at (1.5, -1);  
    \filldraw[unknown_color] (unk_legend_dot)  circle (\legendDotSize); 
    \draw (unk_legend_dot) node[right] {: Unknown Node};
    \coordinate (anchor_legend_dot) at (1.5, -1.5);  
    \filldraw[anchor_color] (anchor_legend_dot)  circle (\legendDotSize);
    \draw (anchor_legend_dot) node[right] {: Anchor Node};
    
    \end{tikzpicture}

    \caption{An example of the DV-Hop localization. A node has a communication radius $R$, and can directly communicate with other nodes within the communication radius. The hop-count between two directly connected nodes is 1, and the hop-counts between other nodes can be derived by information propagation.}
    \label{fig:dv_hop_illustration}
\end{figure}

Although the DV-Hop algorithm provides an effective way to estimate the node locations, its accuracy can be improved by further exploiting the hop-count information between nodes. \cite{wang2024dv} proposes a multi-objective DV-Hop-based algorithm with one objective as the hop loss, which reflects the discrepancy between the real hop-counts and the predicted hop-counts. The predicted hop-counts are the hop-counts derived from the predicted node locations. The hop loss allows for a more straightforward and comprehensive utilization of the hop-count information in the networks, and demonstrates impressive accuracy improvements in the experiments.

While the hop loss proposed in \cite{wang2024dv} demonstrates high efficacy, there remain several aspects where enhancements can be made, particularly in two points.
First, the existing algorithm requires calculating the predicted hop-counts at each step of the optimization process. These hop-count calculations require calling shortest-path algorithms in a graph \cite{madkour2017survey}, which is time expensive. Therefore, reducing the computational complexity for algorithms with hop loss is essential in large-scale networks or real-time applications where efficiency is critical.
Second, because how the hop loss is modeled has a significant influence on the performance of the algorithm, a further study to optimize the design of the hop loss function could lead to more accurate algorithms.

In this paper, a distance-based connectivity consistency (DCC) hop loss is proposed, which reduces the time complexity and achieves better localization accuracy. In the design of this loss function, two points are focused: i) the activation condition (AC) between each pair of nodes, which means whether the hop error between the nodes should be included in the total hop loss in the network, and ii) what the loss function should be for each single pair of nodes, named as individual loss (IL). Our main contributions include:
\begin{enumerate}
    \item We propose an AC based on the node connectivity consistency (CC), named as $AC^{CC}$. We theoretically prove that $AC^{CC}$ has a full coverage of all hop errors in networks, which is not achieved by the existing algorithm using hop loss. In addition, $AC^{CC}$ also has reduced time complexity.
    \item We propose a continuous IL based on the distance between nodes, named as $IL^{DST}$. Because $IL^{DST}$ is a continuous mapping from the node locations to the loss, and it does not require the computation of predicted hop-counts in the optimization process, $IL^{DST}$ provides advantages in both accuracy and time complexity compared with the existing hop loss.
    \item We run simulations with the algorithm using DCC hop loss, which combines $AC^{CC}$ and $IL^{DST}$, and compare the time complexity and accuracy between the results of our algorithm and those of the other algorithms to demonstrate the effectiveness of the proposed method.
\end{enumerate}

The rest of the paper is organized as follows:
Section \ref{sec:related_work} summarizes the work related to DV-Hop based algorithms.
Section \ref{sec:connectivity_loss} introduces the proposed hop loss function as one of the objectives in the algorithm.
Section \ref{sec:exp} presents the evaluations of the simulation results.
Section \ref{sec:conclusion} is the conclusion and future work of this paper.

\section{Related Work}\label{sec:related_work}

The DV-Hop algorithm is a widely used localization technique for IoT. It solves the locations of unknown nodes by estimating the distances between nodes with the hop-counts in the networks. The major steps of the DV-Hop algorithm include broadcasting messages to obtain the hop-counts between nodes, estimating the distance between each anchor node and unknown node, and solving the objective function(s) for localization. The original DV-Hop algorithm is proposed by \cite{niculescu2003dv}, in which an average distance is estimated for each anchor node $i$ as
\begin{equation}
    AvgDis_i = \frac{\sum_{j\in Anchors}\sqrt{(x_i - x_j)^2 + (y_i - y_j)^2}}{\sum_{j\in Anchors}hop_{i, j}},
\end{equation}
where $(x_i, y_i)$ and $(x_j, y_j)$ are anchor node locations, and $hop_{i, k}$ is the hop-count between node $i$ and $j$. Then the distance between an anchor node $i$ and an unknown node $k$ is estimated by
\begin{equation}
    Dis_{i, k} = AvgDis_i \cdot hop_{i, k},
\end{equation}
where $hop_{i, k}$ is the hop-count between $i$ and $k$.



Improvements based on \cite{niculescu2003dv} are developed by scholars to enhance the localization accuracy. \cite{wang2022distance, wang2024dv, wang2024probability, cui2018high, xiao2009reliable, wang2022novel} propose more accurate distance estimations between nodes. \cite{sun2023improvement, jia2022high, wang2022distance, OUYANG@023_2, cui2017novel023, cao2023dv, zhang2024dv, mani2024three, kanwar2021dv} propose bio-inspired optimization algorithms, including genetic algorithms (GAs), to enable the optimization for general forms of objective functions. \cite{xiao2009reliable, yan2018dv, wang2024ubeti, tang2022improved, hu2023nrap, tomic2016improvements} propose accurate algorithms for networks in irregular areas. \cite{rayavarapu2024moans,wang2022distance, wang2024dv, wang20243d} propose models with multi-objective functions to achieve solutions with better accuracy. \cite{gui2020connectivity, wang2024dv} propose objectives incorporating the consistency between the hop-counts derived from the estimated node locations and the real hop-counts. \cite{messous2020online, liouane2023new} propose online DV-Hop algorithms. \cite{wang2023dv} uses the topological similarity derived from hop-counts to improve the localization accuracy. \cite{li2020research} proposes algorithm with dual communication radius to improve the accuracy. \cite{liu2022improved} eliminates the residual errors by neural dynamics.

Among these studies, \cite{wang2024dv} proposes a multi-objective algorithm, with the first objective as the distance estimation loss, and the second objective as the hop loss. The hop loss in \cite{wang2024dv} is used as the baseline in our paper. Denote the real hop-count and the predicted hop-count between node $i$ and $j$ as $Hop_{i, j}^{real}$ and $Hop_{i, j}^{pred}$ respectively, and denote the total hop loss in the network as $HL^{base}$. $HL^{base}$ is defined as
\begin{align}\label{eq:base_hl}
    HL^{base} = \sum_{i=1}^N\sum_{j=1}^N(Hop_{i,j}^{real}-Hop_{i,j}^{pred})^2  
    \cdot\mathbf{1}_{\{Hop_{i, j}^{real}<3\}},
\end{align}
where $N$ is the number of all nodes, and $\mathbf{1}_{\{\cdots\}}$ is the characteristic function.
In equation (\ref{eq:base_hl}), the factor $(Hop_{i,j}^{real}-Hop_{i,j}^{pred})^2$ reflects the discrepancy between the real hop-count and the predicted hop-count for each pair of nodes, and the factor $\mathbf{1}_{\{Hop_{i, j}^{real}<3\}}$ filters out the node pairs where the real hop-counts are too large to get better accuracy. The threshold 3 is obtained empirically.
Equation (\ref{eq:base_hl}) effectively leverages the hop-counts to enhance the performance. Particularly, the information between any two unknown nodes and the information between any two anchor nodes are also used in this objective, while the objectives about distance loss only focus on the relationship between unknown nodes and anchor nodes. The exploitation of the information between arbitrary nodes significantly improves the localization accuracy. Meanwhile, there remain opportunities for further enhancement:
\begin{enumerate}
    \item The predicted hop-counts calculation in every optimization iteration is computationally expensive. This makes the time complexity of the algorithm using hop loss larger than other algorithms purely based on distance losses. Reducing the time complexity of hop-loss calculations is a topic worth studying.
    
    \item The current algorithm does not catch hop errors where $Hop_{i, j}^{real} \geq 3$. Although this filtering improves the algorithm performance compared with penalizing all hop errors, it could also lead to a false negative hop loss in certain cases. Enhancements can be made to more effectively penalize the hop errors and improve the accuracy.
    \item The current hop loss is a piece-wise constant discontinuous function with respect to the predicted node locations. Although this formulation is straightforward, small changes in the predicted locations may not be reflected in this hop loss, making it less sensitive to fine-grained variations in node positions during the optimization process. 
    A loss function with better continuity could more effectively guide the algorithm towards accurate solutions.
\end{enumerate}

\section{High Accuracy Hop Loss with Optimized Efficiency}\label{sec:connectivity_loss}

In this section, a new hop loss function is proposed to improve the localization accuracy and time complexity. Particularly, two questions are studied in the new function:
\begin{enumerate}
    \item In what conditions should the loss between two nodes contribute to the total loss function? In this paper, this condition is named as \textbf{activation condition} (AC).
    \item How should the hop discrepancy between two nodes be quantified? In this paper, the hop discrepancy from a single pair of nodes is named as \textbf{individual loss} (IL).
\end{enumerate}


\begin{table}[h!]
    \centering
    \renewcommand{\arraystretch}{1.15}
    \caption{Notations used in the hop loss calculation.}
    \begin{tabular}{|p{1.2cm}|p{11cm}|}
    \hline
       \textbf{Notation}  & \textbf{Description}  \\ \hline 
       $HL$ & The total hop loss in the network. We use $HL^{<\cdots>}$ to distinguish the hop losses of different algorithms. \\ \hline
       $i, j$ & The node indices.\\ \hline
       $AC_{i,j}$  & A binary function for the activation condition between node $i$ and $j$. $AC_{i,j}=1$ if the activation condition is true. Otherwise $AC_{i,j}=0$.We use $AC_{i,j}^{<\cdots>}$ to distinguish the activation conditions of different algorithms. \\ \hline
       $IL_{i,j}$  & The individual loss between node $i$ and $j$. $IL_{i,j}\geq 0$. We use $IL_{i,j}^{<\cdots>}$ to distinguish the individual losses of different algorithms. \\ \hline
       $Hop_{i,j}^{real}$ & The real hop-count between node $i$ and $j$. \\ \hline
       $Hop_{i,j}^{pred}$ & The hop-count between node $i$ and $j$ derived from the predicted node locations. \\ \hline
       $Dist_{i,j}^{real}$ & The real Euclidean distance between node $i$ and $j$. \\ \hline
       $Dist_{i,j}^{pred}$ & The predicted Euclidean distance between node $i$ and $j$. \\ \hline
       $N$  & The number of nodes in the network. \\ \hline
       $R$ & The communication radius. \\
    \hline
    \end{tabular}
    \label{tab:notations}
\end{table}

The notations needed in this discussion are listed in TABLE \ref{tab:notations}. The general form of the total hop loss in the network can be formulated as
\begin{equation}\label{eq:general_hl}
HL = \sum_{i=1}^{N}\sum_{j=1}^{N}IL_{i,j}\cdot AC_{i,j}.
\end{equation}
Particularly, the baseline hop loss introduced in (\ref{eq:base_hl}) can be rewritten in the form of (\ref{eq:general_hl}) as
\begin{align}
\label{eq:hl_base_2}
HL^{base} =  \sum_{i=1}^{N}\sum_{j=1}^{N}IL_{i,j}^{base}\cdot AC_{i,j}^{base},
\end{align}
where
\begin{align}
\label{eq:ac_base}
AC_{i,j}^{base}  = & \mathbf{1}_{\{Hop_{i,j}^{real} < 3\}}, \\ 
\label{eq:il_base}
IL_{i,j}^{base}  = & (Hop_{i,j}^{real} - Hop_{i,j}^{pred})^2.
\end{align}

The following of this section discusses how to modify $AC_{i,j}$ and $IL_{i,j}$ respectively in the proposed algorithm to achieve better accuracy and efficiency.

\subsection{Activation Condition Based on Connectivity Consistency}\label{ss:ac}
In the proposed AC, named as $AC^{CC}_{i,j}$, the \emph{connectivity consistency} (CC) between nodes is focused. Connectivity refers to whether the hop-count between two nodes is 1. Unlike $AC^{base}_{i,j}$ which activates node pairs with a small enough $Hop^{real}_{i,j}$, $AC^{CC}_{i,j}$ activates node pairs which have inconsistent real connectivity and predicted connectivity:
\begin{equation}\label{eq:ac_CC}
    AC_{i,j}^{CC} = \left\{
    \begin{array}{ll}
        1, & \text{if } (Hop^{real}_{i,j}=1 \text{ and } Hop^{pred}_{i,j} > 1)  \text{ or } (Hop^{real}_{i,j}>1 \text{ and }  Hop^{pred}_{i,j} = 1),\\
        0, & \text{otherwise.}
    \end{array}
    \right.
\end{equation}

TABLE \ref{tab:ac_compare} compares $AC^{base}_{i,j}$ and $AC^{CC}_{i,j}$ for different cases. Notice that in Case 2 where  $Hop^{real}_{i,j}\geq 3, Hop^{pred}_{i,j} = 1$, we have $AC^{base}_{i,j} = 0$ while $AC^{CC}_{i,j} = 1$. This means that $AC^{CC}_{i,j}$ has the advantage to catch such hop errors which cannot be caught by the baseline method. Fig. \ref{fig:ac_base_fail_case} illustrates an example corresponding to Case 2, where the prediction has a hop error between node 1 and 4. The baseline method cannot detect this inconsistency because of the design of $AC^{base}_{i,j}$, resulting in $HL^{base} = 0$, therefore cannot effectively penalize this prediction error. On the other hand, $AC^{CC}_{i,j}$ can detect this discrepancy and guide the algorithm to search for a more accurate solution.

\begin{table}[h!]
    \centering
    \caption{The comparison between $AC^{base}_{i,j}$ and $AC^{CC}_{i,j}$ for different cases.}
    \label{tab:ac_compare}
    \renewcommand{\arraystretch}{1.35}
    \begin{tabular}{|c|p{6cm}|c|c|}
         \hline
         \textbf{Case \#} & \textbf{Condition} & $\mathbf{AC^{base}_{i,j}}$ & $\mathbf{AC^{CC}_{i,j}}$ \\ \hline
         \textbf{1} & $Hop^{real}_{i,j}=1$ and $Hop^{pred}_{i,j} > 1$ & 1 & 1 \\ \hline
         \textbf{2} & $Hop^{real}_{i,j} \geq 3$ and $Hop^{pred}_{i,j} = 1$ & 0 & 1 \\ \hline
         \textbf{3} & $Hop^{real}_{i,j}=2$ and $Hop^{pred}_{i,j} > 2$ & 1 & 0 \\ \hline
         \textbf{4} & $Hop^{real}_{i,j} \geq 3 $ and $Hop^{pred}_{i,j} > 1$ & 0 & 0 \\ \hline
    \end{tabular}
\end{table}



\begin{figure}[h!]
    \centering
    \subfigure[real network]{\includegraphics[width=0.276\textwidth]{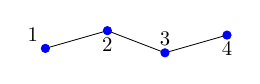}}
    \quad 
    \subfigure[prediction]{\includegraphics[width=0.156\textwidth]{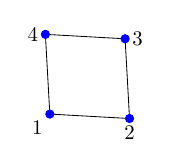}}
    \hspace{10mm}

    \caption{An example network with a hop-count inconsistency between $Hop^{real}_{1,4}$ and $Hop^{pred}_{1,4}$. Each dot represents a node, and each segment between nodes represents the nodes are connected. 
    $HL^{base}$ cannot catch this hop error because $AC^{base}_{1,4}=0$. On the other hand, $AC^{CC}_{1,4}=1$, making the modified hop loss capable of catching the error.}
    \label{fig:ac_base_fail_case}
\end{figure}

The provided example demonstrates the effectiveness of the proposed activation condition in a particular case. In what follows, a proof is made to show that any hop-count discrepancy can be reflected in $AC^{CC}_{i,j}$, although  $AC^{CC}_{i,j}$ only explicitly activates the first-order hop inconsistency.

\begin{prop}\label{prop:ac_CC}
  If there are two nodes $i, j$ in a network making $Hop^{real}_{i,j} \neq Hop^{pred}_{i,j}$, then there exist $i', j'$ in the network making $AC^{CC}_{i', j'}=1$.
\end{prop}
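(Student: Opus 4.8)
The plan is to prove the contrapositive: assuming that $AC^{CC}_{i',j'}=0$ for every pair of nodes $i',j'$, I would show that $Hop^{real}_{i,j}=Hop^{pred}_{i,j}$ for all $i,j$. This is logically equivalent to the stated claim, since the negation of ``there exist $i',j'$ with $AC^{CC}_{i',j'}=1$'' is exactly ``$AC^{CC}_{i',j'}=0$ for all pairs,'' and the negation of ``there exist $i,j$ with $Hop^{real}_{i,j}\neq Hop^{pred}_{i,j}$'' is ``$Hop^{real}_{i,j}=Hop^{pred}_{i,j}$ for all $i,j$.''

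First I would unpack the meaning of the hypothesis $AC^{CC}_{i,j}=0$. By the definition in (\ref{eq:ac_CC}), $AC^{CC}_{i,j}=1$ holds precisely when exactly one of $Hop^{real}_{i,j}$ and $Hop^{pred}_{i,j}$ equals $1$. Hence $AC^{CC}_{i,j}=0$ means that the two hop-counts are either both equal to $1$ or both strictly greater than $1$; equivalently, nodes $i$ and $j$ are directly connected in the real network if and only if they are directly connected in the predicted network. Assuming this holds for every pair, the real and predicted networks must have identical edge sets.

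The key observation I would then invoke is that the hop-count between two nodes, being the length of a shortest path in the connectivity graph, is determined solely by the edge set of that graph. Two networks sharing the same edge set are the same graph, so their shortest-path lengths, and therefore their hop-counts, coincide for every pair. I would make this precise by noting that a shortest path of a given length between $i$ and $j$ uses only edges present in the graph; since both networks have the same edges, every path in one is a path in the other, so the minimal path lengths agree. This yields $Hop^{real}_{i,j}=Hop^{pred}_{i,j}$ for all $i,j$, completing the contrapositive.

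I do not expect a serious obstacle, since the argument reduces to the elementary fact that graph distance is a function of adjacency alone. The only point requiring care is the bookkeeping of boundary cases: confirming that ``$AC^{CC}_{i,j}=0$'' genuinely collapses to ``adjacency agrees'' after accounting for the degenerate self-pair $i=j$ (where both hop-counts are $0$ and $AC^{CC}=0$ trivially) and for disconnected pairs (where, under identical edge sets, both hop-counts are simultaneously infinite or undefined, and thus equal in the relevant sense). Once these cases are checked to be consistent with the adjacency-agreement reading, the proof follows directly.
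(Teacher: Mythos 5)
Your proof is correct, and it takes a genuinely different route from the paper's. You prove the contrapositive: if $AC^{CC}_{i',j'}=0$ for every pair, then by (\ref{eq:ac_CC}) real and predicted adjacency agree for every pair of distinct nodes (after disposing of the degenerate self-pair), so the two connectivity graphs have identical edge sets; since hop-count is graph distance, a function of the edge set alone, all hop-counts coincide. The paper instead argues directly and constructively: assuming, say, $Hop^{real}_{i,j} > Hop^{pred}_{i,j} = h$, it takes the predicted shortest path $N_0,\ldots,N_h$ from $i$ to $j$, notes $Hop^{pred}_{N_k,N_{k+1}}=1$ for each $k$, and uses the triangle inequality $\sum_{k=0}^{h-1} Hop^{real}_{N_k,N_{k+1}} \geq Hop^{real}_{i,j} > h$ to conclude that some consecutive pair satisfies $Hop^{real}_{N_k,N_{k+1}} > 1 = Hop^{pred}_{N_k,N_{k+1}}$, hence $AC^{CC}_{N_k,N_{k+1}}=1$, with the symmetric argument when $Hop^{real}_{i,j} < Hop^{pred}_{i,j}$. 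Each approach buys something: the paper's proof localizes the activated pair on the erroneous shortest path, which is informative about where the loss fires relative to the hop error during optimization; yours is shorter and more conceptual, avoids the case split on the sign of the discrepancy, and explicitly handles disconnected pairs (infinite or undefined hop-counts), a boundary case the paper's argument does not mention --- though it survives there too, since a finite predicted path whose real consecutive hop-counts were all $1$ would force a finite real hop-count. One small point of care in your write-up: your rephrasing that $AC^{CC}_{i,j}=1$ holds ``precisely when exactly one of the two hop-counts equals $1$'' is valid only because, for distinct nodes, both hop-counts are at least $1$ (or infinite); you acknowledge this by treating the self-pair separately, so the reduction to ``adjacency agrees'' is sound.
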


\begin{proof}
  First, consider the cases where $Hop^{real}_{i,j} > Hop^{pred}_{i,j}$. Let $h=Hop^{pred}_{i,j}$. Denote the nodes along the predicted shortest path from $i$ to $j$ as $N_0, \cdots, N_h$, where $N_0=i, N_h=j$. We have $Hop^{pred}_{N_k, N_{k+1}}=1$ for $k=0, \cdots, h-1$. On the other hand,
  \begin{equation}
      \sum_{k=0}^{h-1} Hop^{real}_{N_k, N_{k+1}} \geq Hop^{real}_{N_0,N_h}=Hop^{real}_{i,j} > H^{pred}_{i,j} = h.
  \end{equation}
  Therefore, there exists at least one $0\leq k < h$ making
  \begin{equation}
      Hop^{real}_{N_k, N_{k+1}} > 1 = Hop^{pred}_{N_k, N_{k+1}}.
  \end{equation}
  According to (\ref{eq:ac_CC}), $AC^{CC}_{N_k, N_{k+1}} = 1$. The proposition holds in this condition.

  Similarly, consider the cases where $Hop^{real}_{i,j} < Hop^{pred}_{i,j}$. Let $h=Hop^{real}_{i,j}$. Denote the nodes along the real shortest path from $i$ to $j$ as $N_0, \cdots, N_h$, where $N_0=i, N_h=j$.  We have $Hop^{real}_{N_k, N_{k+1}}=1$ for $k=0, \cdots, h-1$. On the other hand,
  \begin{equation}
      \sum_{k=0}^{h-1} Hop^{pred}_{N_k, N_{k+1}} \geq Hop^{pred}_{N_0,N_h}=Hop^{pred}_{i,j} > h.
  \end{equation}
  Therefore, there exists at least one $0\leq k < h$ making
  \begin{equation}
      Hop^{pred}_{N_k, N_{k+1}} > 1 = Hop^{real}_{N_k, N_{k+1}}.
  \end{equation}
  According to (\ref{eq:ac_CC}), $AC^{CC}_{N_k, N_{k+1}} = 1$. The proposition holds in this condition.

  Combining the two conditions, the proposition is proved.
\end{proof}

Proposition \ref{prop:ac_CC} guarantees that as long as there is a hop error, there will be a non-zero activation condition for some node pairs. Therefore, if the individual loss is designed properly, the total hop loss will be non-zero and effectively reflect the hop-count discrepancy. This also indicates that although $AC^{CC}_{i,j}$ deactivates some conditions which are active in $AC^{base}_{i,j}$, like Case 3 in TABLE \ref{tab:ac_compare}, those hop discrepancies will still be implicitly reflected in the proposed hop loss. 

The simplicity of computing the connectivity is another advantage of $AC^{CC}_{i,j}$. Obtaining the connectivity only requires comparing the node distance with $R$, and does not need calculating the hop-counts in the network, which is time expensive. The simplicity of computing $AC^{CC}_{i,j}$ lays the foundation for a time efficient algorithm.

\subsection{Individual Loss Based on Distance between Nodes}
The motivation for defining a new IL lies in addressing two aspects of the baseline method.
First, the primary objective of localization algorithms is to determine node locations, which are continuous points in the coordinate space. In contrast, hop-count is a discontinuous, piece-wise constant function that maps the continuous coordinates to discrete hop-count values. This mapping could lead to scenarios where minor changes in coordinates cannot be reflected in the hop-count related functions. By designing the hop loss as a continuous function of the node coordinates, the sensitivity of the model to these minor variations can be enhanced.
Second, the computation of hop-counts involves a high time complexity. In contrast, the calculation of the Euclidean distances between nodes is much faster. If the hop loss is defined to focus on a function of the distances between nodes, we can circumvent the need for hop-count calculations.
These two considerations motivate the proposal of an IL based on the distance between nodes, aiming for improved accuracy and efficiency in the localization process. This IL is named as $IL^{DST}_{i,j}$, where {DST} stands for \emph{distance}.

In the proposed algorithm, $IL^{DST}_{i,j}$ is directly tied to $AC^{CC}_{i,j}$ defined in (\ref{eq:ac_CC}). Therefore, it only needs to be defined on conditions where $AC^{CC}_{i,j} \neq 0$. $IL^{DST}_{i,j}$ is defined in the following two cases to guide the predicted distances between nodes toward their real values:
\begin{enumerate}
    \item If $Hop^{real}_{i,j}=1$, $Hop^{pred}_{i,j}>1$: in this case, we have $Dist^{pred}_{i,j}>R \geq Dist^{real}_{i,j}$. Define
    \begin{equation}\label{eq:il_dst1}
        IL^{DST}_{i,j} = Dist^{pred}_{i,j} - R. 
    \end{equation}
    \item $Hop^{real}_{i,j}>1$, $Hop^{pred}_{i,j}=1$: in this case, we have $Dist^{pred}_{i,j}\leq R < Dist^{real}_{i,j}$. Define
    \begin{equation}\label{eq:il_dst2}
        IL^{DST}_{i,j} = R - Dist^{pred}_{i,j}. 
    \end{equation}
\end{enumerate}
Combining (\ref{eq:il_dst1}) and (\ref{eq:il_dst2}), we have
\begin{equation}\label{eq:il_dst}
    IL^{DST}_{i,j} = |Dist^{pred}_{i,j} - R| \text{ when } AC^{CC}_{i,j} \neq 0.
\end{equation}

\begin{figure}[h!]
    \centering
    \hfill
    \subfigure[$Dist_{i,j}^{pred} > R \geq Dist_{i,j}^{real}$]{\includegraphics[width=0.35\textwidth]{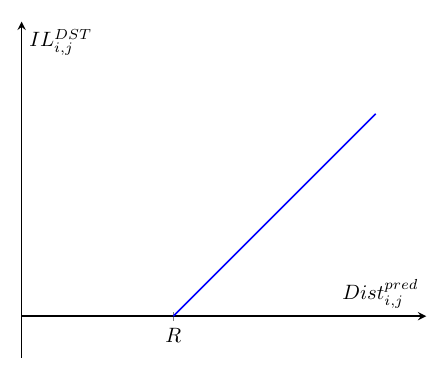}}
\hfill  
    \subfigure[$Dist_{i,j}^{pred} \leq R < Dist_{i,j}^{real}$]{\includegraphics[width=0.35\textwidth]{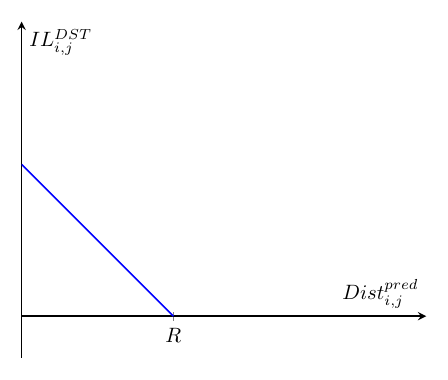}}
    \hspace{10mm}

    \caption{The relation between $IL_{i,j}^{DST}$ and $Dist_{i,j}^{pred}$ in the two conditions.}
    \label{fig:ildst}
\end{figure}

Fig. \ref{fig:ildst} plots the relation between $Dist^{pred}_{i,j}$ and $IL^{DST}_{i,j}$. For $Dist^{pred}_{i,j}$, the direction toward $Dist^{real}_{i,j}$ is always consistent with the direction toward smaller $IL^{DST}_{i,j}$. Therefore, minimizing the proposed hop loss can guide the predicted node distances toward the real values.

There are two advantages of using $IL^{DST}_{i,j}$. First, $IL^{DST}_{i,j}$ is a continuous function of the node locations. Therefore, it can reflect the small location differences in solutions and lead to smoother optimization processes. Second, $IL^{DST}_{i,j}$ is more computationally time efficient compared with $IL^{base}_{i,j}$ because it does not require the computation of predicted hop-counts.

\subsection{Total Hop Loss with High Accuracy and Optimized Efficiency}
Combining $AC^{CC}_{i,j}$ and $IL^{DST}_{i,j}$, the proposed hop loss, $HL^{DCC}$, is defined as
\begin{equation}\label{eq:hl_ccdst}
HL^{DCC} = \sum_{i=1}^{N}\sum_{j=1}^{N}IL^{DST}_{i,j}\cdot AC^{CC}_{i,j},
\end{equation}
where DCC stands for \emph{distance-based connectivity consistency}.

As a combination of $AC^{CC}_{i,j}$ and $IL^{DST}_{i,j}$, the advantages of $HL^{DCC}$ include:
\begin{itemize}
    \item It guarantees a coverage of all hop errors in either a direct or indirect way, making all hop errors effectively penalized.
    \item The loss function is continuous regarding the node locations, making the location differences more recognizable, and the optimization processes smoother.
    \item The computation of the hop loss does not depend on calculating the predicted hop-counts, making the algorithm time efficient.
\end{itemize}


\section{Simulation Results}\label{sec:exp}

In this section, we provide the simulation results of the proposed algorithm and compare them with the results of other highly regarded algorithms.

\subsection{Data, Algorithm Settings and Evaluation Metric}
\begin{figure}[h]
	\center
	\subfigure[Random]{\includegraphics[scale=0.25]{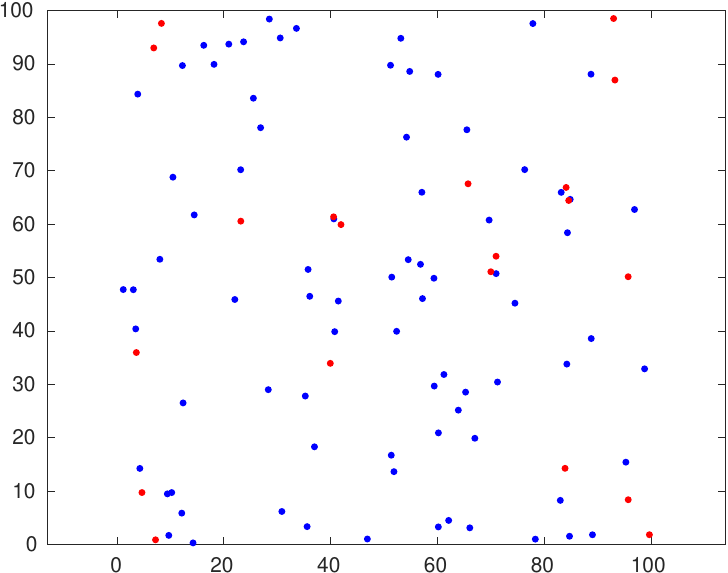}}
	\subfigure[C-shaped]{\includegraphics[scale=0.25]{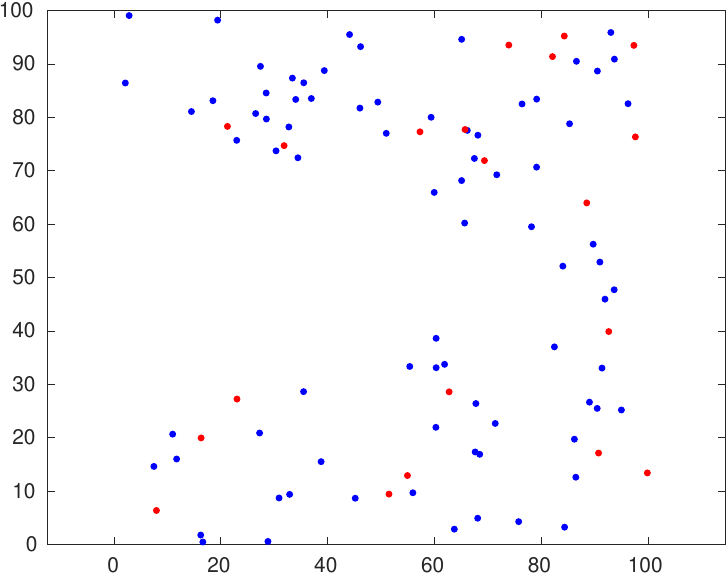}}
	\subfigure[O-shaped]{\includegraphics[scale=0.25]{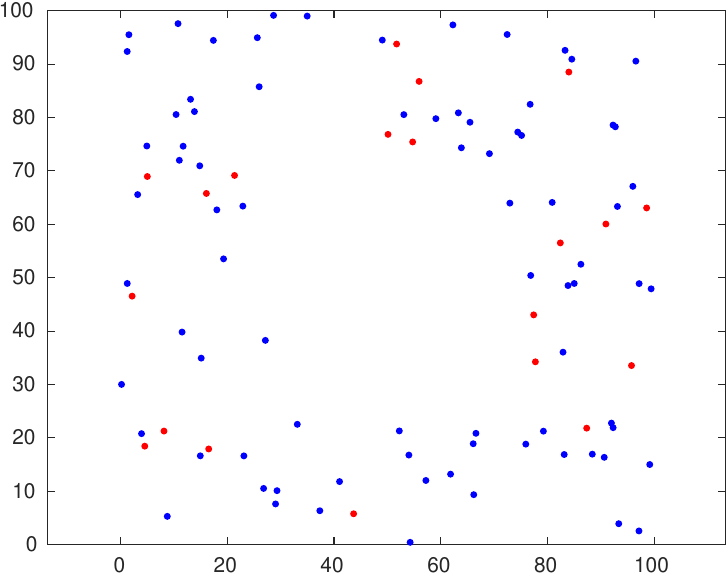}}
	\subfigure[X-shaped]{\includegraphics[scale=0.25]{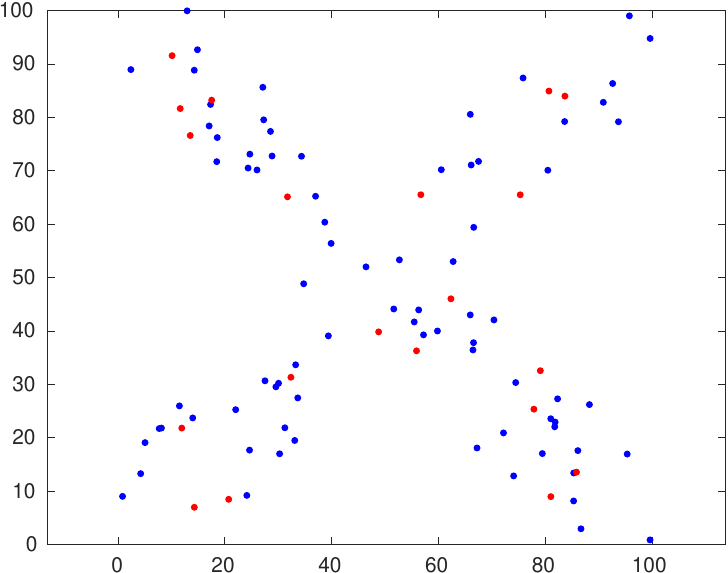}}
	
	\caption{The types of the network topology in simulations: randomly distributed, C-shaped, O-shaped and X-shaped. Each blue dot \textcolor{blue}{$\bullet$} represents an unknown node, and each red dot \textcolor{red}{$\bullet$} represents an anchor nodes.}
	\label{fig:four_scenes}
\end{figure}

\begin{table}[h]
	\renewcommand{\arraystretch}{1.03}
	\caption{Simulation parameters. $ Pc $: crossover probability,   $ Pm $: mutation probability, $ TN $: total nodes, $ N_{_{a}} $: the number of anchor nodes}
	\label{tab:exp_param}
	\centering
	\begin{tabular}{l@{\hspace{20pt}}|l|l@{\hspace{20pt}}|l}
		\hline
		\hline
		\textbf{ Parameters} & \textbf{Value} & \textbf{ Parameters} & \textbf{Value}    \\
		
		\hline
		
		$ TN $  & 100 & $ N_{_{a}} $     & 5-30 \\
		
		$ R  $     & 25-40  & Detection area      & 100m×100m\\

            Independent repeat test    & 50 & $ Pc $      & 0.9\\
            $ Pm $    & 0.1 & Population size        & 20 \\

            Variable dimension ($ v $)        & 2($ TN-N_{_{a}} $) & Maximum iterations       & 500\\
            
		\hline
		\hline
	\end{tabular}
\end{table}

\paragraph{Simulation Data} We follow the dataset used in \cite{wang2022distance, wang2024dv}, etc. In the dataset, a network is in a $100 \times 100$ square region. To evaluate the performance of the algorithm in both regular areas and irregular areas, the simulations run on four types of network topology showed in Fig. \ref{fig:four_scenes}: randomly distributed, C-shaped, O-shaped and X-shaped. In each network, there are 100 nodes. The first $N_a$ nodes, which are randomly distributed, are selected as anchor nodes, and the rest of the nodes are unknown nodes. Multiple communication radius $R$ are also experimented. See the range of $N_a$ and $R$ in TABLE \ref{tab:exp_param}.

\paragraph{Algorithm Settings} The evaluated algorithm uses multi-objective optimization. It has one objective as $HL^{DCC}$ defined in (\ref{eq:hl_ccdst}), and the other objective as the DEMN distance loss proposed in \cite{wang2024dv}. It uses the same multi-objective genetic algorithm as that in \cite{wang2024dv} for optimization. The detailed settings are in TABLE \ref{tab:exp_param}. The evaluated algorithm is referred to as DCC in the followings.

\paragraph{Evaluation Metric}
The mean localization errors ($MLEs$) are used as the evaluation metric, defined as
\begin{equation}
    MLEs = \frac{100\%}{N_{u} \cdot R} \sum_{k=1}^{N_{u}} \sqrt{(x_k^{pred} - x_k^{real})^2 + (y_k^{pred} - y_k^{real})^2},
\end{equation}
where $N_u$ is the number of unknown nodes, and $(x_k^{pred}, y_k^{pred})$, $(x_k^{real}, y_k^{real})$ are the predicted and the ground truth node location respectively for node $k$.

\subsection{Comparison Results}

We compare the $MLEs$ of the evaluated algorithm with those from other highly regarded algorithms, including \cite{niculescu2003dv09, gui2020connectivity, cui2017novel023, 8913604@023_1, OUYANG@023_2, cai2019multi024, wang2022distance, wang2024dv}. Among the compared algorithms, DEMN-DV-Hop \cite{wang2024dv} has the same settings as ours except for the design of the hop loss. Therefore, comparing with DEMN-DV-Hop directly highlights the specific improvements introduced by the hop loss modifications, and DEMN-DV-Hop is used as the baseline for more detailed evaluations.

TABLE \ref{tab_position_error_random}, \ref{tab_position_error_Cshape}, \ref{tab_position_error_Oshape}, \ref{tab_position_error_Xshape} list the results in scenarios with varying network topology, $N_a$ and $R$. In the results, DCC achieves meaningful improvements compared with other algorithms, and has the best accuracy in most of the cases. Compared with DEMN-DV-Hop, the improvement of DCC is particularly noticeable in cases where $N_a$ is small and $R$ is large. For example, in the randomly distributed network, when $N_a=5, R=40$, the relative improvement is as large as 48\%. This is because when $N_a$ is small, the information from the distance estimation is limited, and when $R$ is large, nodes can have more first-order connectivity relationships, which can be reflected in the hop loss. The evaluation results indicate that DCC exploits those connectivity information more effectively than DEMN-DV-Hop. 

\begin{figure}[h]
	\center
	\includegraphics[scale=0.18]{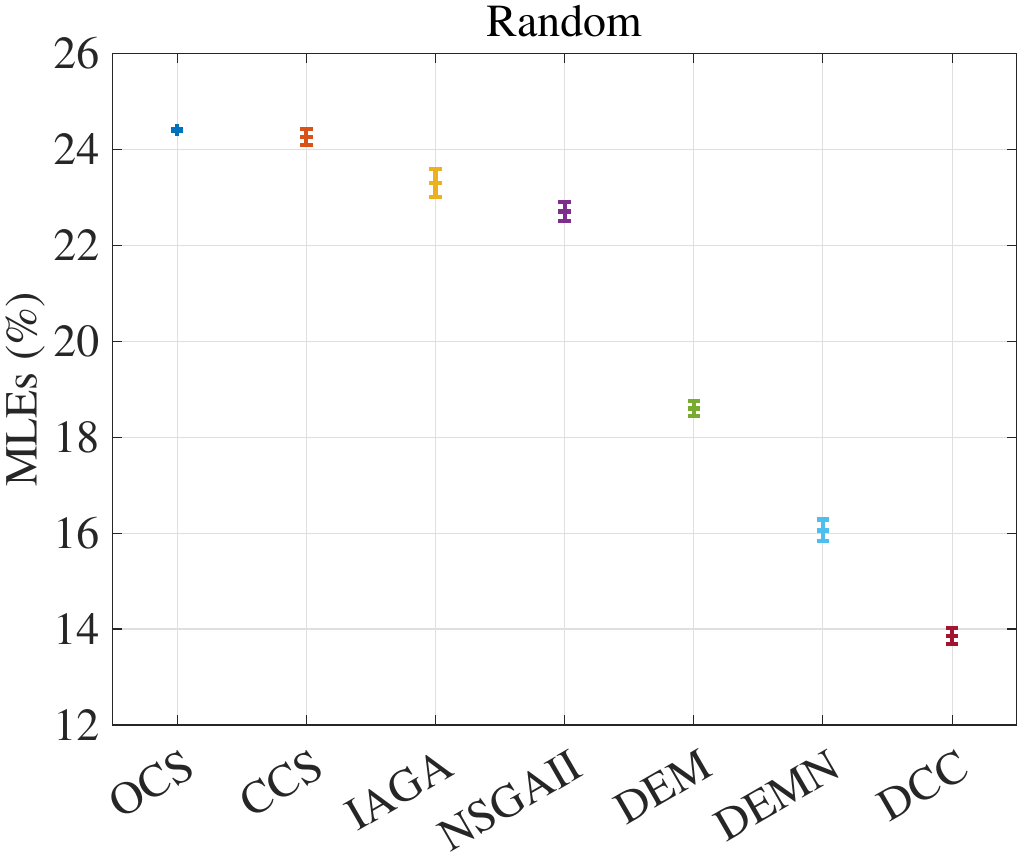}
	\includegraphics[scale=0.18]{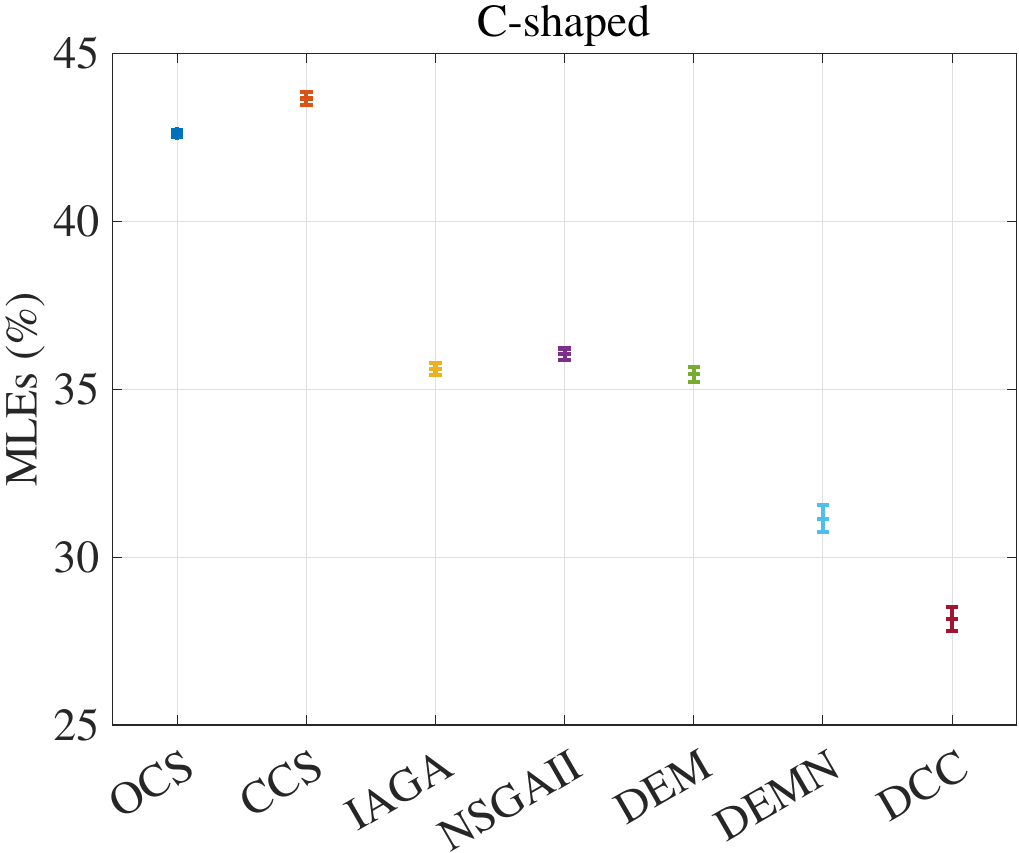}
        \includegraphics[scale=0.18]{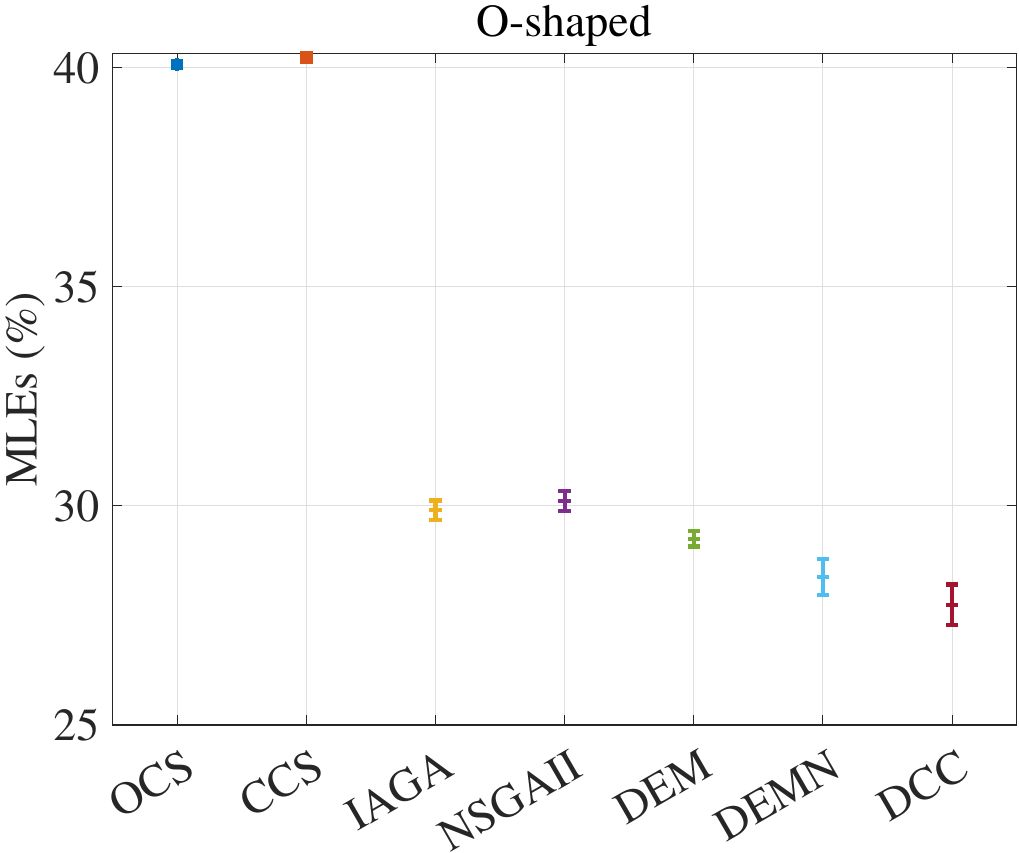}
	\includegraphics[scale=0.18]{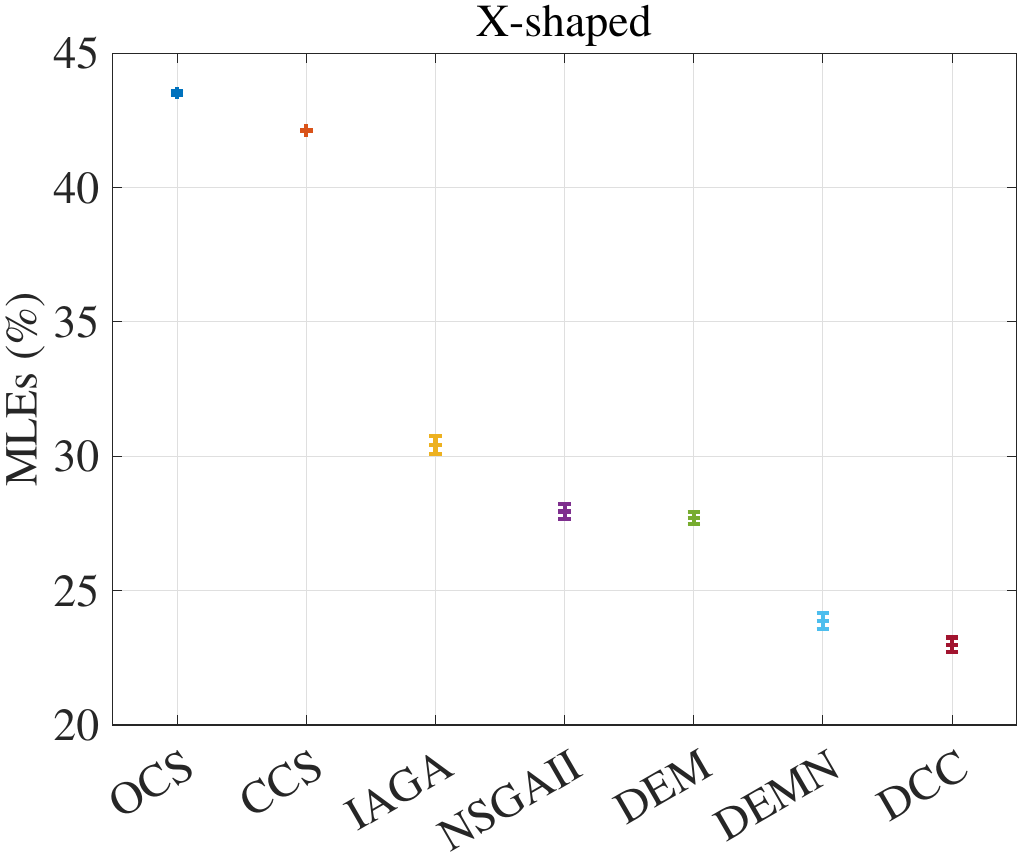}
	
	\caption{The $ 95\% $ confidence interval of $ MLEs$. $ {N}_{a}=20 $, $  R=25 $. }
	\label{fig:confidence_interval}
\end{figure}

\begin{table}[h!]
	\renewcommand{\arraystretch}{1.01}
	\caption{The comparison of the $MLEs$ (\%) in the random network.}
	\label{tab_position_error_random}
	\centering
	
	\begin{tabular}{ccccccccc}
		\hline
		\hline
		\multicolumn{1}{c}{$ {N}_{a} $} & \multicolumn{4}{c}{5} & \multicolumn{4}{c}{10}\\  
		\cmidrule(lr){1-1}
		\cmidrule(lr){2-5}
		\cmidrule(lr){6-9}
		
		Communication radius ($ R $)	&   25	& 	30	& 	35	& 	40	& 	25	& 	30	& 	35	& 	40  \\
		\hline
		
		DV-Hop\cite{niculescu2003dv09}	&    80.12	&   195.2	&  	142.6	& 	122.1	& 	38.41	& 	32.75	& 	33.08	& 	30.55	  \\
		
		CC-DV-Hop\cite{gui2020connectivity}	&    \textbf{40.61} & 122.8  & 83.35  & 85.22 & 34.31 &  30.53 &  29.14 &  27.07 \\
		
		OCS-DV-Hop\cite{cui2017novel023} &   50.02 &	71.04& 	76.08& 	76.52& 	31.65& 	26.47& 	30.96& 	26.82    \\
		
		CCS-DV-Hop\cite{8913604@023_1}	&    50.17 & {69.12} &	73.54 &	76.49 &	31.65 &	25.51 &	29.34 &	25.60  \\
		
		IAGA-DV-Hop\cite{OUYANG@023_2}	&   102.4	&  92.81	& 82.74	& 75.33 &	30.34 &	27.35  &	29.90  & 26.01   \\
		
		NSGAII-DV-Hop\cite{cai2019multi024} &	91.05 & 	86.90 & 	71.29 & 	63.36 & 	30.14 & 	26.85 & 	29.27 & 	25.60  \\
		
		DEM-DV-Hop\cite{wang2022distance} &	85.00 & 	82.20 & 	70.03 & 	59.64 & 	27.54 & 	23.72 & 	26.56 & 	24.12  \\

            DEMN-DV-Hop \cite{wang2024dv} & 80.82 & 76.98 & 59.77 & 43.95 & 19.59 & 16.60 & 17.77 & 14.53  \\
            
		\textbf{DCC} & 67.02 & \textbf{60.66} & \textbf{34.90} & \textbf{22.84} & \textbf{17.27} & \textbf{14.35} & \textbf{15.47} & \textbf{11.92}  \\
  
		\hline
		\hline
  
		\multicolumn{1}{c}{$ {N}_{a} $} & \multicolumn{4}{c}{15} & \multicolumn{4}{c}{20} \\ 
		\cmidrule(lr){1-1}
		\cmidrule(lr){2-5}
		\cmidrule(lr){6-9}
		
		Communication radius ($ R $)	&   25	& 	30	& 	35	& 	40	& 	25	& 	30	& 	35	& 	40 \\
		\hline
		
		DV-Hop\cite{niculescu2003dv09}	& 	28.65	& 	29.06	& 	32.86	& 	27.09   &    31.98	& 	29.12	& 	28.19	& 	26.16  \\
		
		CC-DV-Hop\cite{gui2020connectivity}	& 23.67 &  27.09 &  28.89 &  23.47 &    26.71  & 24.37  & 23.73 &  22.48 \\
		
		OCS-DV-Hop\cite{cui2017novel023} & 	24.36& 	21.70& 	27.04& 	23.16 &   24.40	& 	21.91	& 	20.94	& 	22.62 \\
		
		CCS-DV-Hop\cite{8913604@023_1}	&	24.35 &	21.28 &	26.40 &	22.82 &    24.25	& 	22.10	& 	20.99	& 	22.13 \\
		
		IAGA-DV-Hop\cite{OUYANG@023_2}	&	24.72 &	21.41 &	25.83 &	21.06 &   23.29 &	20.22 &	21.35 &	20.39   \\
		
		NSGAII-DV-Hop\cite{cai2019multi024} & 	23.68 & 	22.10 & 	25.32 & 	20.99  &	22.70 & 	21.39 & 	21.06 & 	20.10  \\
		
		DEM-DV-Hop\cite{wang2022distance} & 	20.77 & 	19.92 & 	22.30 & 	19.30 &	18.59 & 	19.49 & 	19.26 & 	17.78 \\ 

            DEMN-DV-Hop \cite{wang2024dv} &  16.49 & 15.16 & 15.00 & 12.31 & 16.03 & 13.84 & 13.03 & 11.13 \\
            
            \textbf{DCC} & \textbf{14.12} & \textbf{13.02} & \textbf{13.38} & \textbf{10.55} & \textbf{13.80} & \textbf{11.95} & \textbf{10.77} & \textbf{9.26} \\ 
		
		\hline
		\hline

  \multicolumn{1}{c}{$ {N}_{a} $} & \multicolumn{4}{c}{25} & \multicolumn{4}{c}{30} \\ 
		\cmidrule(lr){1-1}
		\cmidrule(lr){2-5}
		\cmidrule(lr){6-9}
		
		Communication radius ($ R $)	&   25	& 	30	& 	35	& 	40	& 	25	& 	30	& 	35	& 	40 \\
		\hline
		
		DV-Hop\cite{niculescu2003dv09}	& 	27.77	& 	27.36	& 	26.38	& 	26.10	& 	33.78	& 	29.63	& 	30.87	& 	26.42     \\
		
		CC-DV-Hop\cite{gui2020connectivity}	 & 22.41  & 23.61 &  21.45 &  22.98 &  27.65 & 26.78 &  25.33 &  20.83\\
		
		OCS-DV-Hop\cite{cui2017novel023} & 	20.42	& 	18.75	& 	19.85	& 	19.02	& 	21.89	& 	18.65	& 	22.27	& 	19.26   \\
		
		CCS-DV-Hop\cite{8913604@023_1} & 	21.23	& 	19.10 	&	20.16 	&	19.13	& 	22.29	& 	19.09 	&	21.85	& 	19.36  \\
		
		IAGA-DV-Hop\cite{OUYANG@023_2}	&	20.38 &	18.60 &	19.47 &	18.77 &	21.34 &	18.27 &	19.64 &	18.50   \\
		
		NSGAII-DV-Hop\cite{cai2019multi024}  & 	19.92 & 	18.94 & 	19.40 & 	18.34 & 	20.48 & 	18.66 & 	19.63 & 	18.30  \\
		
		DEM-DV-Hop\cite{wang2022distance} & 	18.24 & 	17.01 & 	18.32 & 	16.87 & 	18.21 & 	16.78 & 	17.59 & 	16.41  \\ 
            
            DEMN-DV-Hop \cite{wang2024dv} & 14.01 & 12.29 & 11.83 & 10.36 & 14.02 & 12.45 & 11.45 & 9.95 \\

            \textbf{DCC}  & \textbf{12.32} & \textbf{10.64} &  \textbf{9.66} & \textbf{8.62} & \textbf{12.24} & \textbf{10.81} & \textbf{9.14} & \textbf{8.37} \\ 
		
		\hline
		\hline
	\end{tabular}
\end{table}

\begin{table}[h!]
	\renewcommand{\arraystretch}{1.01}
	\caption{The comparison of the $MLEs$ (\%) in the C-shaped network.}
	\label{tab_position_error_Cshape}
	\centering
	\begin{tabular}{ccccccccc}
		\hline
		\hline
		\multicolumn{1}{c}{$ {N}_{a} $} & \multicolumn{4}{c}{5} & \multicolumn{4}{c}{10}\\  
		\cmidrule(lr){1-1}
		\cmidrule(lr){2-5}
		\cmidrule(lr){6-9}
		
		Communication radius ($ R $)	&   25	& 	30	& 	35	& 	40	& 	25	& 	30	& 	35	& 	40  \\
		\hline
		
		DV-Hop\cite{niculescu2003dv09}	&    88.45	&   120.1	&  	104.5	& 	90.64	& 	66.83	& 	93.20	& 	81.80	& 	66.76	  \\
		
		CC-DV-Hop\cite{gui2020connectivity}	&    83.98 & 73.33  & 64.06  & 63.87 & 52.25 &  47.20  &  46.30 &  38.77 \\
		
		OCS-DV-Hop\cite{cui2017novel023} &   \textbf{47.77} &	{40.02}   & 	{37.57}   & 	36.71   & 	42.26   & 	35.14   & 	35.29   & 35.48    \\
		
		CCS-DV-Hop\cite{8913604@023_1}&   50.09 &	41.14 &	38.41
		&  {34.29} &	42.97 &	32.88 &	33.34 &	33.02  \\
		
		IAGA-DV-Hop\cite{OUYANG@023_2}	&   116.3	&  103.6	& 86.13	& 78.20 &	41.76 &	32.55  &	32.97  & 31.71  \\
		
		NSGAII-DV-Hop\cite{cai2019multi024} &	69.82	& 	58.85	& 	58.03	& 	54.38	& 	39.33	& 	31.76	& 	28.24	& 	28.91  \\
		
		DEM-DV-Hop\cite{wang2022distance} &	62.34  &    53.11 & 	56.83 & 	51.81 &    40.21    & 	31.55 & 	27.69 & 	27.37  \\ 

            DEMN-DV-Hop\cite{wang2024dv} & 57.69 & 43.35 & 35.78 & 33.37 & 34.55 & 27.32 & 21.63 & 21.21  \\
  
            \textbf{DCC}  & 49.74 & \textbf{36.13} & \textbf{29.39} & \textbf{28.51} & \textbf{31.98} & \textbf{25.55} & \textbf{19.20} & \textbf{19.34}\\
		
		\hline
		\hline
		
		\multicolumn{1}{c}{$ {N}_{a} $} & \multicolumn{4}{c}{15} & \multicolumn{4}{c}{20} \\ 
		\cmidrule(lr){1-1}
		\cmidrule(lr){2-5}
		\cmidrule(lr){6-9}
		
		Communication radius ($ R $)	&   25	& 	30	& 	35	& 	40	& 	25	& 	30	& 	35	& 	40  \\
		\hline
		
		DV-Hop\cite{niculescu2003dv09}	& 	67.79	& 	91.93	& 	76.57   & 	70.52   &    61.31	& 	68.12	& 	59.57	& 	52.60  \\
		
		CC-DV-Hop\cite{gui2020connectivity}	& 52.76 &  50.12 &  43.18 &  45.59 &  43.91  & 38.03  & 41.07 &  37.84 \\
		
		OCS-DV-Hop\cite{cui2017novel023} & 	45.67   & 	36.35   & 	34.05   & 	33.46   &   42.61	& 	35.09	& 	35.85	& 	32.45	 \\
		
		CCS-DV-Hop\cite{8913604@023_1}	&	44.90 &	36.29 &	33.24 &	32.96 &    43.65	& 	34.97	& 	35.51	& 	32.39  \\
		
		IAGA-DV-Hop\cite{OUYANG@023_2}	& 42.97 &	34.98 &	32.95 &	33.11 &   35.59 &	29.70 &	29.84 &	28.84  \\
		
		NSGAII-DV-Hop\cite{cai2019multi024} 	& 	40.23	& 	31.26	& 	27.23	& 	28.72 &	36.04  &	30.44 & 27.23 & 	27.17  \\
		
		DEM-DV-Hop\cite{wang2022distance}  & 	38.44 & 	30.04 & 	26.35 & 	26.99 &  35.44  &	29.32   &	{25.34}   &	26.04  \\

            DEMN-DV-Hop\cite{wang2024dv} & 33.99 & 27.64 & 22.86 & 21.95 & 31.00 & 25.54 & 21.31 & 20.60 \\
  
            \textbf{DCC}  &  \textbf{30.46} & \textbf{25.03} & \textbf{20.57} & \textbf{20.22} &\textbf{28.23} & \textbf{23.53} & \textbf{19.34} & \textbf{19.31}  \\
		
		\hline
		\hline

  \multicolumn{1}{c}{$ {N}_{a} $} & \multicolumn{4}{c}{25} & \multicolumn{4}{c}{30} \\ 
		\cmidrule(lr){1-1}
		\cmidrule(lr){2-5}
		\cmidrule(lr){6-9}
		
		Communication radius ($ R $)	&   25	& 	30	& 	35	& 	40	& 	25	& 	30	& 	35	& 	40 \\
		\hline
		
		DV-Hop\cite{niculescu2003dv09}		& 	60.91	& 	59.27	& 	53.28	& 	46.08	& 	63.28	& 	53.51	& 	49.10	& 	42.31     \\
		
		CC-DV-Hop\cite{gui2020connectivity}	 & 44.02  & 39.50 &  40.02 &  34.51 &  44.88 & 40.87 & 42.16 &  35.14\\
		
		OCS-DV-Hop\cite{cui2017novel023} 	& 	54.28	& 	39.17	& 	40.97	& 36.02	& 	53.55	& 	40.38	& 	42.64	& 	37.98   \\
		
		CCS-DV-Hop\cite{8913604@023_1}	& 	50.81	& 	37.29 	&	39.86 	&	34.16	& 	53.97	& 	38.72	&	42.37	& 	37.79  \\
		
		IAGA-DV-Hop\cite{OUYANG@023_2}	&	39.66 &	31.05 &	31.81 &	29.17 &	38.63 &	30.95 &	31.99 &	28.93   \\
		
		NSGAII-DV-Hop\cite{cai2019multi024} & 	38.23 & 	30.83 & 28.11 & 	27.21  &	37.17 & 	31.52 & 	27.87 & 	26.32   \\
		
		DEM-DV-Hop\cite{wang2022distance}   &	37.23   &	29.52  & 	{25.96 }  &   25.89   &	35.98   &	29.60  &	{25.84}   &   {25.23}  \\

            DEMN-DV-Hop\cite{wang2024dv} & 30.96 & 24.98 & 21.09 & 20.04 & 29.38 & 25.02 & 20.87 & 19.21 \\
  
            \textbf{DCC}  & \textbf{26.62} & \textbf{23.03} & \textbf{18.80} & \textbf{18.50} & \textbf{25.83} & \textbf{22.59} & \textbf{18.56} & \textbf{17.21} \\
		
		\hline
		\hline
	\end{tabular}
\end{table}

\begin{table}[h!]
	\renewcommand{\arraystretch}{1.01}
	\caption{The comparison of the $MLEs$ (\%) in the O-shaped network.}
	\label{tab_position_error_Oshape}
	\centering
	\begin{tabular}{ccccccccc}
		\hline
		\hline
		\multicolumn{1}{c}{$ {N}_{a} $} & \multicolumn{4}{c}{5} & \multicolumn{4}{c}{10} \\  
		\cmidrule(lr){1-1}
		\cmidrule(lr){2-5}
		\cmidrule(lr){6-9}
		
		Communication radius ($ R $)	&   25	& 	30	& 	35	& 	40	& 	25	& 	30	& 	35	& 	40  \\
		\hline
		
		DV-Hop\cite{niculescu2003dv09}	&    84.04	&   152.2	&  	121.7	& 	100.2	& 	50.68	& 	136.3	& 	109.2	& 	92.96	  \\
		
		CC-DV-Hop\cite{gui2020connectivity}	&    \textbf{41.56} & \textbf{35.11}  & 46.24  & {36.64} & 42.70 &  27.78 &  24.31 &  34.78 \\
		
		OCS-DV-Hop\cite{cui2017novel023} &   43.29 &    46.62   & 	43.39   & 	42.82   & 	37.64   & 	34.66   & 	31.86   & 	28.95    \\
		
		CCS-DV-Hop\cite{8913604@023_1}	&    41.80 &	45.16 &	43.57 &	42.69 &	37.74 &	33.09 &	30.36 &	27.84  \\
		
		IAGA-DV-Hop\cite{OUYANG@023_2}	&   62.26	&  61.53	& 56.40	& 51.91 &	34.49 &	30.54  &	26.55  & 29.53   \\
		
		NSGAII-DV-Hop\cite{cai2019multi024} &	55.11	& 	59.35	& 	46.04	& 	47.87	& 	35.77	& 	27.84	& 	24.62	& 	27.98	  \\
		
		DEM-DV-Hop\cite{wang2022distance} &	50.72  &	52.84 & 	42.33 & 	44.42 &    33.02 & 	{24.85} & 	21.96 & 	{22.57}  \\ 

            DEMN-DV-Hop \cite{wang2024dv} & 75.45 & 56.14 & 40.80 & 37.33 & \textbf{32.43} & 21.29 & 18.34 & 17.14  \\

            \textbf{DCC}  & 59.42 & 38.90 & \textbf{26.69} & \textbf{22.18} & 32.77 & \textbf{19.98} & \textbf{16.66} & \textbf{15.66}  \\
		
		\hline
		\hline
		
		\multicolumn{1}{c}{$ {N}_{a} $} & \multicolumn{4}{c}{15} & \multicolumn{4}{c}{20} \\ 
		\cmidrule(lr){1-1}
		\cmidrule(lr){2-5}
		\cmidrule(lr){6-9}
		
		Communication radius ($ R $)	&   25	& 	30	& 	35	& 	40	& 	25	& 	30	& 	35	& 	40	\\
		\hline
		
		DV-Hop\cite{niculescu2003dv09}	& 	53.49	& 	101.9	& 	80.13	& 	68.32   &    47.14	& 	92.79	& 	73.85	& 	64.74    \\
		
		CC-DV-Hop\cite{gui2020connectivity}	 & 38.49 &  28.15 &  21.56 &  26.27 &  37.39  & 27.11  & 24.26 &  26.05 \\
		
		OCS-DV-Hop\cite{cui2017novel023}  & 	38.10   & 	30.04   & 	26.93   & 	24.01   &   40.07	& 	34.53	& 	28.52	& 	24.77  \\
		
		CCS-DV-Hop\cite{8913604@023_1}	&	38.07 &	30.02 &	27.32 &	23.97 &    40.23	& 	33.09	& 	27.53	& 	23.95 \\
		
		IAGA-DV-Hop\cite{OUYANG@023_2}	&	{30.79} &	24.50 &	21.08 &	21.77 &   29.90 &	23.26 &	20.35 &	19.15  \\
		
		NSGAII-DV-Hop\cite{cai2019multi024} & 	32.53	& 	22.33	& 	20.59	& 	20.24 &	30.11  &	21.33 & 	18.92 & 	17.37  \\
		
		DEM-DV-Hop\cite{wang2022distance} & 	32.57 & 	21.42 & 	18.70 & 	{17.49}  &	29.25  &	20.87  &	{17.24}  &	{14.92}  \\ 

            DEMN-DV-Hop \cite{wang2024dv} & 30.67 & 19.81 & 16.93 & 14.28 & 28.16 & 18.71 & 15.60 & 12.93 \\

            \textbf{DCC} & \textbf{30.38} & \textbf{18.70} & \textbf{14.91} & \textbf{13.04} & \textbf{27.84} & \textbf{17.98} & \textbf{14.36} & \textbf{11.86} \\
		
		\hline
		\hline

  \multicolumn{1}{c}{$ {N}_{a} $} & \multicolumn{4}{c}{25} & \multicolumn{4}{c}{30} \\ 
		\cmidrule(lr){1-1}
		\cmidrule(lr){2-5}
		\cmidrule(lr){6-9}
		
		Communication radius ($ R $)	&   25	& 	30	& 	35	& 	40	& 	25	& 	30	& 	35	& 	40	\\
		\hline
		
		DV-Hop\cite{niculescu2003dv09}	& 	48.10	& 	83.41	& 	67.19	& 	57.83	& 	48.06	& 	47.54	& 	44.59	& 	37.29     \\
		
		CC-DV-Hop\cite{gui2020connectivity}	& 38.29  & 26.88 &  21.37 &  23.35 &  38.77 & 24.99 &  27.43 &  24.28\\
		
		OCS-DV-Hop\cite{cui2017novel023} & 	46.46	& 	35.16	& 	30.84	& 	24.87	& 	48.43	& 	38.93	& 	32.48	& 	28.96   \\
		
		CCS-DV-Hop\cite{8913604@023_1}	& 	46.32	& 	34.70 	&	30.14 	&	24.97	& 	50.08	& 	36.90 	&	32.00	& 	26.80  \\
		
		IAGA-DV-Hop\cite{OUYANG@023_2} &	32.32 &	22.45 &	20.43 &	18.94 &	31.53 &	22.25 &	21.01 &	19.00   \\
		
		NSGAII-DV-Hop\cite{cai2019multi024} & 	31.92 & 	21.22 & 19.10 & 	17.83  &	30.26 & 	20.57 & 	19.76 & 	17.94   \\
		
		DEM-DV-Hop\cite{wang2022distance}  &	30.73  &	21.03  & 	{16.83}  &	{14.51}  &	28.90  &	{20.09}  &	{17.06}  &	{15.16}  \\ 

            DEMN-DV-Hop \cite{wang2024dv} & 27.63 & 17.91 & 14.98 & 12.34 & 25.24 & 17.18 & 14.95 & 12.41 \\

            \textbf{DCC} & \textbf{25.85} & \textbf{17.14} & \textbf{13.49} & \textbf{11.18} & \textbf{23.76} & \textbf{16.46} & \textbf{13.13} & \textbf{11.18} \\
		
		\hline
		\hline
	\end{tabular}
\end{table}

\begin{table}[h!]
	\renewcommand{\arraystretch}{1.01}
	\caption{The comparison of the $MLEs$ (\%) in the X-shaped network.}
	\label{tab_position_error_Xshape}
	\centering
	\begin{tabular}{ccccccccc}
		\hline
		\hline
		\multicolumn{1}{c}{$ {N}_{a} $} & \multicolumn{4}{c}{5} & \multicolumn{4}{c}{10} \\  
		\cmidrule(lr){1-1}
		\cmidrule(lr){2-5}
		\cmidrule(lr){6-9}
		
		Communication radius ($ R $)	&   25	& 	30	& 	35	& 	40	& 	25	& 	30	& 	35	& 	40	\\
		\hline
		
		DV-Hop\cite{niculescu2003dv09}	&    58.46	&   108.88	&  	95.24	& 	91.84	& 	60.38	& 	96.73	& 	86.75	& 	83.64	  \\
		
		CC-DV-Hop\cite{gui2020connectivity}	&    52.30 & 49.44  & 49.42  & 45.36 & 55.27 &  46.77 &  45.76 &  40.71\\
		
		OCS-DV-Hop\cite{cui2017novel023} &   42.85 &	49.13   & 	47.49   & 	38.15   & 	39.75   & 	40.24   & 	35.80   & 	32.52    \\
		
		CCS-DV-Hop\cite{8913604@023_1}	&    42.69  &	48.68   &	45.04   &	34.45   &	40.18   &	40.08   &	34.36   &	30.64   \\
		
		IAGA-DV-Hop\cite{OUYANG@023_2}	&   47.46	&  49.47	& 43.92	& 40.17 &	33.99 &	35.08  &	31.10   & 30.07  \\
		
		NSGAII-DV-Hop\cite{cai2019multi024} &	42.37	& 	44.45	& 	38.87	& 	34.70	& 	33.27	& 	31.89	& 	27.84	& 	27.18	 \\
		
		DEM-DV-Hop\cite{wang2022distance} &	40.43  &	42.36 & 	37.50 & 	33.24 &    32.95 & 30.90 & 	26.03 & 	25.14 \\ 

            DEMN-DV-Hop \cite{wang2024dv} & 32.46 & 33.69 & 30.59 & 25.42 & 28.13 & 23.95 & \textbf{19.30} & 18.38 \\

            \textbf{DCC} & \textbf{29.97} & \textbf{29.99} & \textbf{26.13}  & \textbf{22.41} & \textbf{27.59} & \textbf{23.76} & 19.36 & \textbf{17.43} \\
		
		\hline
		\hline
		
		\multicolumn{1}{c}{$ {N}_{a} $} & \multicolumn{4}{c}{15} & \multicolumn{4}{c}{20} \\ 
		\cmidrule(lr){1-1}
		\cmidrule(lr){2-5}
		\cmidrule(lr){6-9}
		
		Communication radius ($ R $)	&   25	& 	30	& 	35	& 	40	& 	25	& 	30	& 	35	& 	40	\\
		\hline
		
		DV-Hop\cite{niculescu2003dv09}	& 	51.12	& 	81.10	& 	70.13	& 	72.57   &    49.07	& 	78.38	& 	68.07	& 	67.90  \\
		
		CC-DV-Hop\cite{gui2020connectivity}	 & 43.97 &  38.37 &  35.97 &  36.14 &  42.36  & 40.13  & 36.76 &  38.06 \\   
		
		OCS-DV-Hop\cite{cui2017novel023}  & 	42.18   & 	41.91   & 	37.48   & 	35.86   &   43.51	& 	45.41	& 	37.52	& 	36.84 \\
		
		CCS-DV-Hop\cite{8913604@023_1}	 &	38.37   &	38.90   &	33.59   &	32.61 &    42.11	& 	43.44	& 	35.62	& 	36.04 \\
		
		IAGA-DV-Hop\cite{OUYANG@023_2}	&	30.83   &	29.51   &	26.14 &	27.25  &   30.42 &	29.60 &	27.35 &	27.01  \\
		
		NSGAII-DV-Hop\cite{cai2019multi024} & 	29.96	& 	28.18	& 	23.68	& 	24.20  &	27.94  &	26.55 & 	23.04 & 	23.18  \\
		
		DEM-DV-Hop\cite{wang2022distance} & 	29.39 & 	27.26 & 	21.81 & 	{21.27}   &	27.70  &	{25.03}  &	21.65  &	20.96  \\ 

            DEMN-DV-Hop \cite{wang2024dv} & 26.36 & 23.07 & 17.76 & 17.24  & 23.85 & \textbf{20.91} & 17.38 & 17.06  \\

            \textbf{DCC}  & \textbf{25.60} & \textbf{22.22} & \textbf{17.14} & \textbf{15.67}  & \textbf{22.92} & 20.95 & \textbf{16.20} & \textbf{15.34}\\
		
		\hline
		\hline

  \multicolumn{1}{c}{$ {N}_{a} $} & \multicolumn{4}{c}{25} & \multicolumn{4}{c}{30} \\ 
		\cmidrule(lr){1-1}
		\cmidrule(lr){2-5}
		\cmidrule(lr){6-9}
		
		Communication radius ($ R $)	&   25	& 	30	& 	35	& 	40	& 	25	& 	30	& 	35	& 	40 \\
		\hline
		
		DV-Hop\cite{niculescu2003dv09}	& 	52.52	& 	69.58	& 	61.29	& 	56.98	& 	54.51	& 	56.82	& 	49.58	& 	47.62     \\
		
		CC-DV-Hop\cite{gui2020connectivity}	 & 45.50  & 44.27 &  40.72 &  39.52 &  45.83 & 44.09 &  39.54 &  39.61\\   
		
		OCS-DV-Hop\cite{cui2017novel023} 	& 	45.50	& 	45.87	& 	39.09	& 	42.20	& 	45.51	& 	49.90	& 	42.38	& 	45.61   \\
		
		CCS-DV-Hop\cite{8913604@023_1}	& 	46.39	& 	46.22 	&	38.32 	&	40.86	& 	44.27	& 	48.19 	&	41.28	& 	44.55  \\
		
		IAGA-DV-Hop\cite{OUYANG@023_2}	&	29.89 &	28.34 &	26.35 &	26.90 &	26.34 &	26.89 &	25.70 &	25.96   \\
		
		NSGAII-DV-Hop\cite{cai2019multi024} & 	28.38 & 	26.49 & 22.24 & 	23.18  &	25.18 & 	24.32 & 	20.91 & 	20.91  \\
		
		DEM-DV-Hop\cite{wang2022distance} &	28.10  &	25.47  & 	20.23  &	{20.52}  &	25.05  &	23.51  &	{18.81}  &	{18.02}  \\ 

            DEMN-DV-Hop \cite{wang2024dv} & 24.65 & 21.15 & 16.51 & 16.91 & 22.82 & 19.68 & 14.47 & 15.07 \\

            \textbf{DCC}   & \textbf{23.24} & \textbf{20.49} & \textbf{15.37} & \textbf{14.58} & \textbf{21.50} & \textbf{19.04} & \textbf{13.62} & \textbf{12.99} \\
		
		\hline
		\hline
	\end{tabular}
\end{table}

To analyze the distribution of the evaluation results, Fig. \ref{fig:confidence_interval} displays the 95\% confidence intervals (CI) for the scenarios where $N_a=20, R=25$. The midpoint of the CI represents the mean value of $MLEs$, and the width of the CI is related to the standard deviation of the data and the confidence level. The CIs demonstrate that the improvement of DCC is significant.

Fig. \ref{fig:sol_est_vs_gt} displays the graphs of example solutions with the location error vectors obtained by DEMN-DV-Hop and DCC. The solutions of DCC have particularly significant improvements in certain areas, which are highlighted by green boxes in the graphs.

\begin{figure*}[h]
    \centering
\begin{tikzpicture}
\begin{scope}[scale=1]
    \newcommand*\colLabels{{"Random", "C-shaped", "O-shaped", "X-shaped"}}
    \def\rowLabels{{"DEMN \quad \quad", "\quad DCC \quad \quad"}}

    \def\totalScale{0.7}
    
    \def\nRows{2}
    \def\nCols{4}
    
    \def\imageWidth{4.5cm*\totalScale}
    \def\imageHeight{3.5cm*\totalScale}
    
    \def\startX{0*\totalScale}
    \def\startY{0*\totalScale}

    \pgfmathsetmacro{\myScale}{0.25*\totalScale}

    \node[anchor=north west] (img) at (\startX, \startY + \nRows *\imageHeight - \imageHeight/2) {\includegraphics[scale=\myScale]{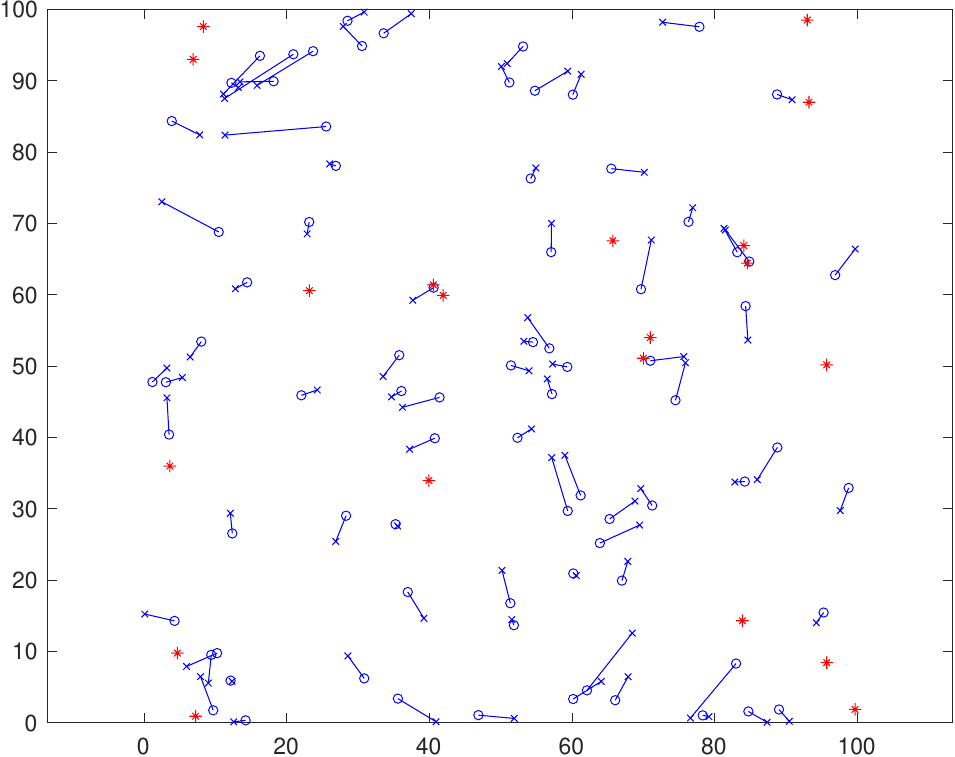}};
    \node[anchor=north west]  at (\startX + \imageWidth, \startY + \nRows *\imageHeight - \imageHeight/2) {\includegraphics[scale=\myScale]{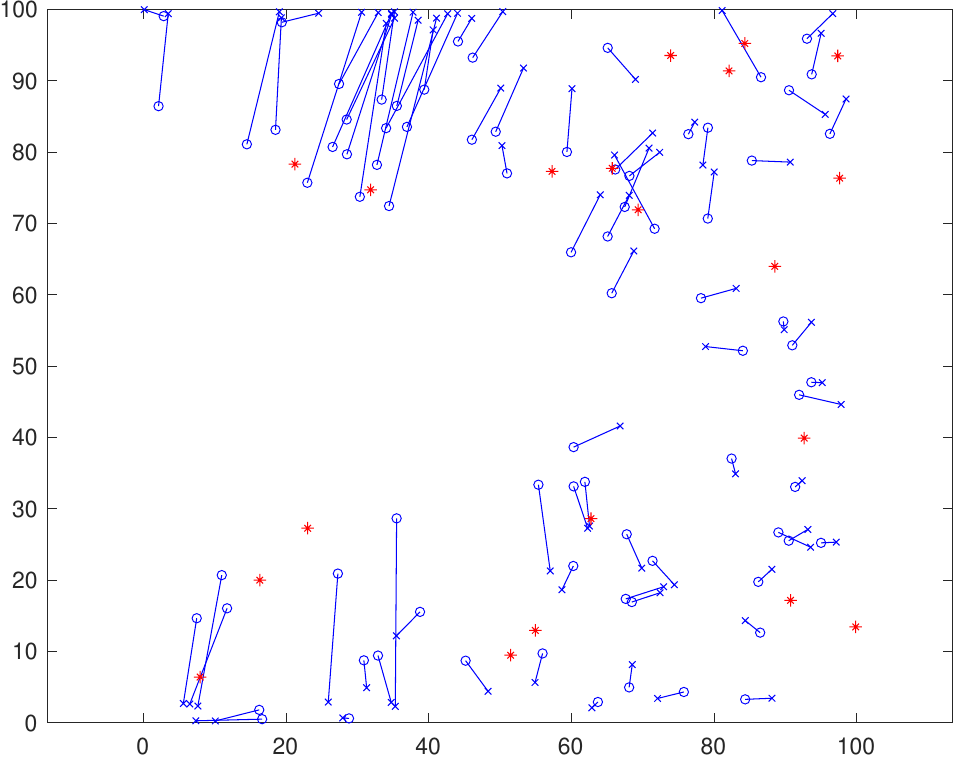}
    \begin{picture}(0,0)
        \put(-58,4){\color{green}\framebox(12,22)} 
    \end{picture}
    };
    \node[anchor=north west]  at (\startX + 2 * \imageWidth, \startY + \nRows *\imageHeight - \imageHeight/2) {\includegraphics[scale=\myScale]{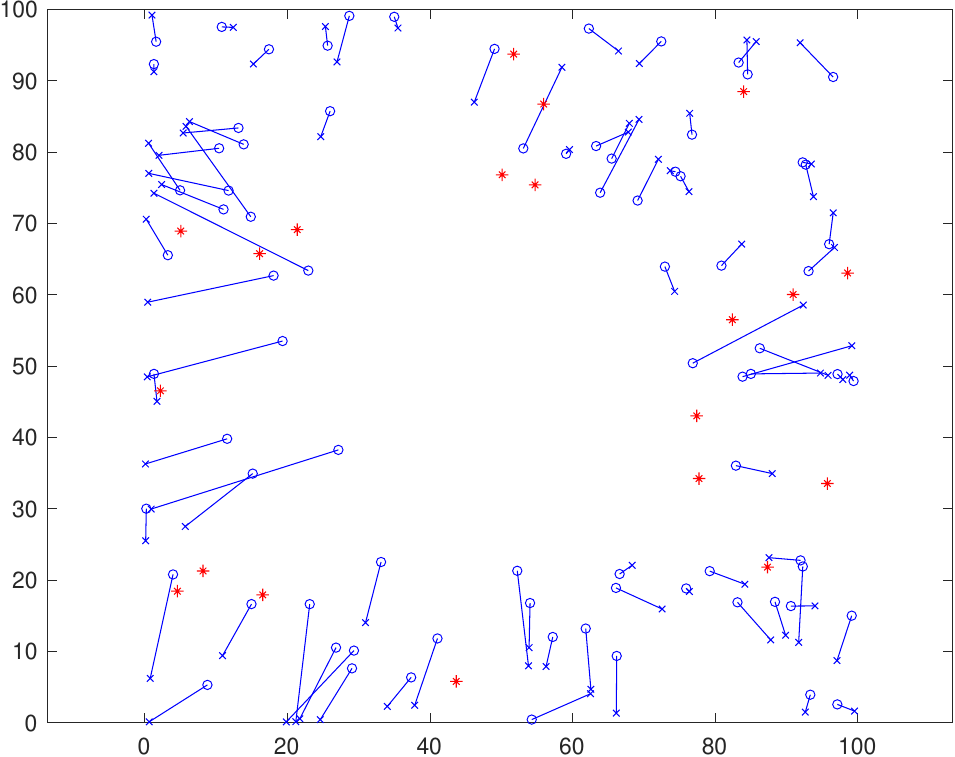}
    \begin{picture}(0,0)
        \put(-75, 18){\color{green}\framebox(22,12)} 
    \end{picture}};
    \node[anchor=north west]  at (\startX + 3 * \imageWidth, \startY + \nRows *\imageHeight - \imageHeight/2) {\includegraphics[scale=\myScale]{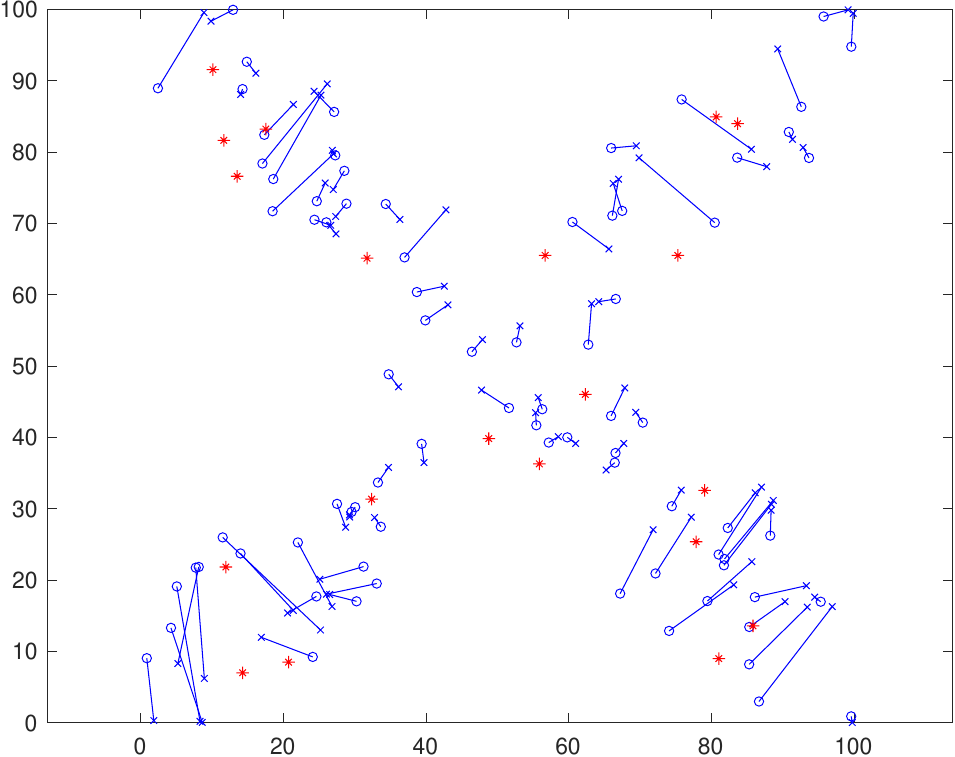}};

    \node[anchor=north west]  at (\startX, \startY + \nRows *\imageHeight - 3 * \imageHeight/2) {\includegraphics[scale=\myScale]{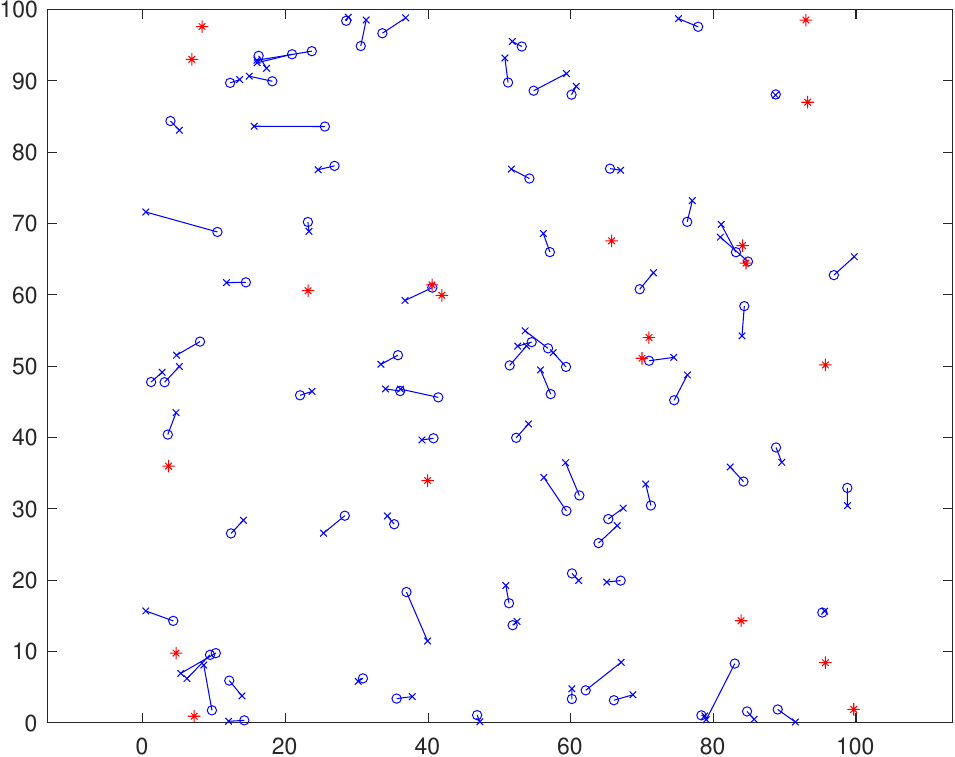}};
    \node[anchor=north west]  at (\startX + \imageWidth, \startY + \nRows *\imageHeight - 3 * \imageHeight/2) {\includegraphics[scale=\myScale]{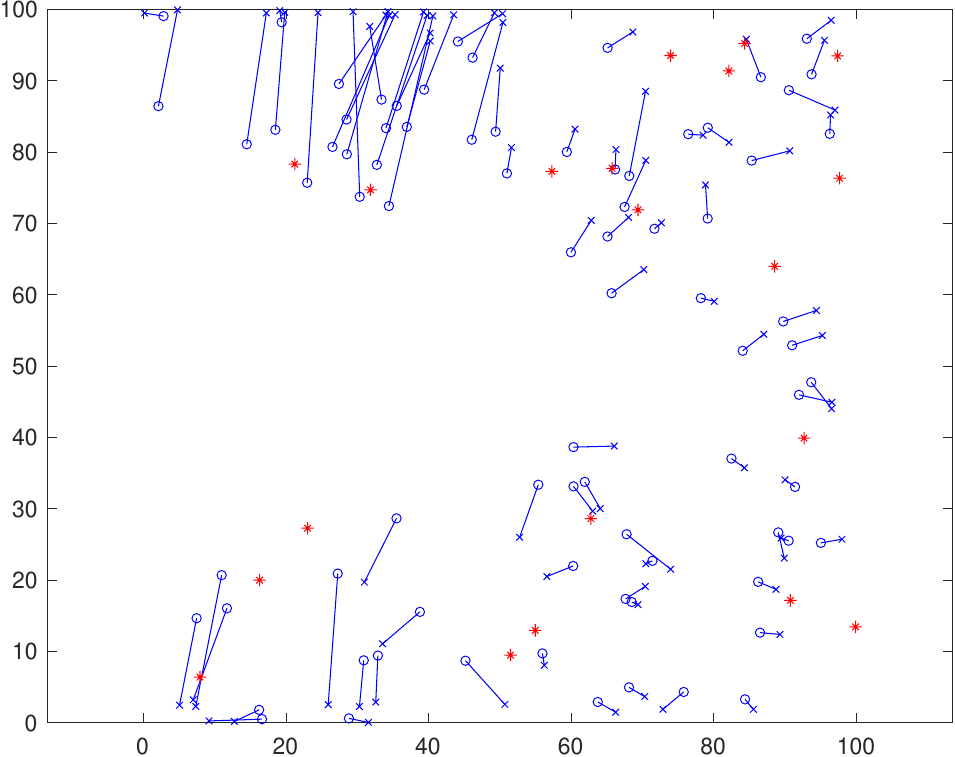}
    \begin{picture}(0,0)
        \put(-58,4){\color{green}\framebox(12,22)} 
    \end{picture}
    };
    \node[anchor=north west]  at (\startX + 2 * \imageWidth, \startY + \nRows *\imageHeight - 3 * \imageHeight/2) {\includegraphics[scale=\myScale]{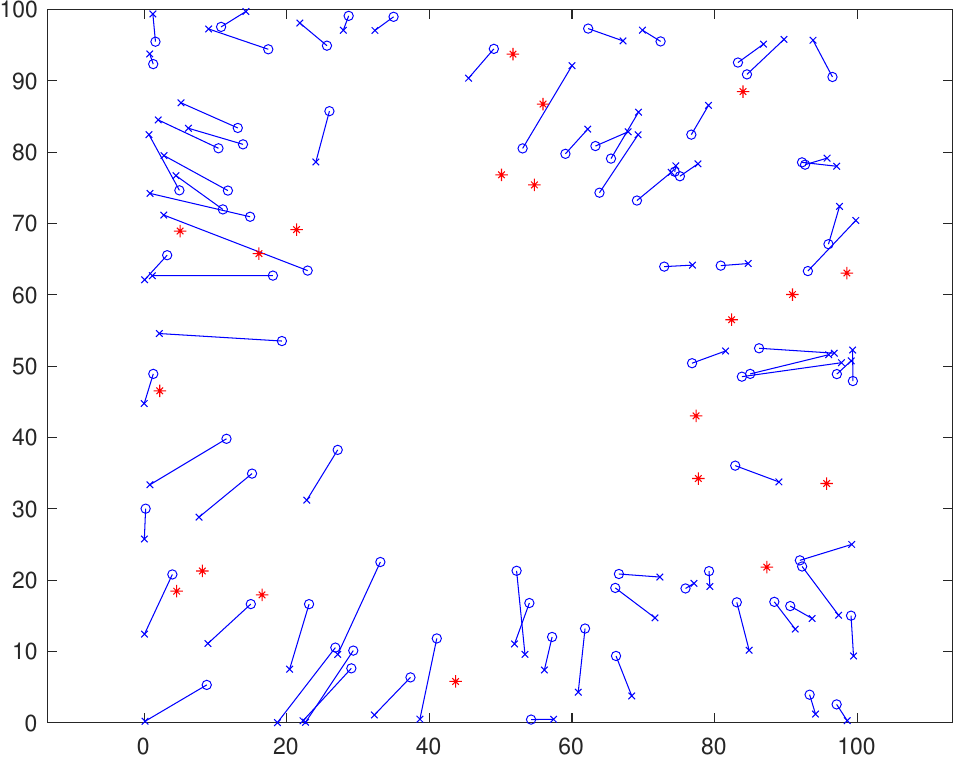}
    \begin{picture}(0,0)
        \put(-75, 18){\color{green}\framebox(22,12)} 
    \end{picture}};
    \node[anchor=north west]  at (\startX + 3 * \imageWidth, \startY + \nRows *\imageHeight - 3 * \imageHeight/2) {\includegraphics[scale=\myScale]{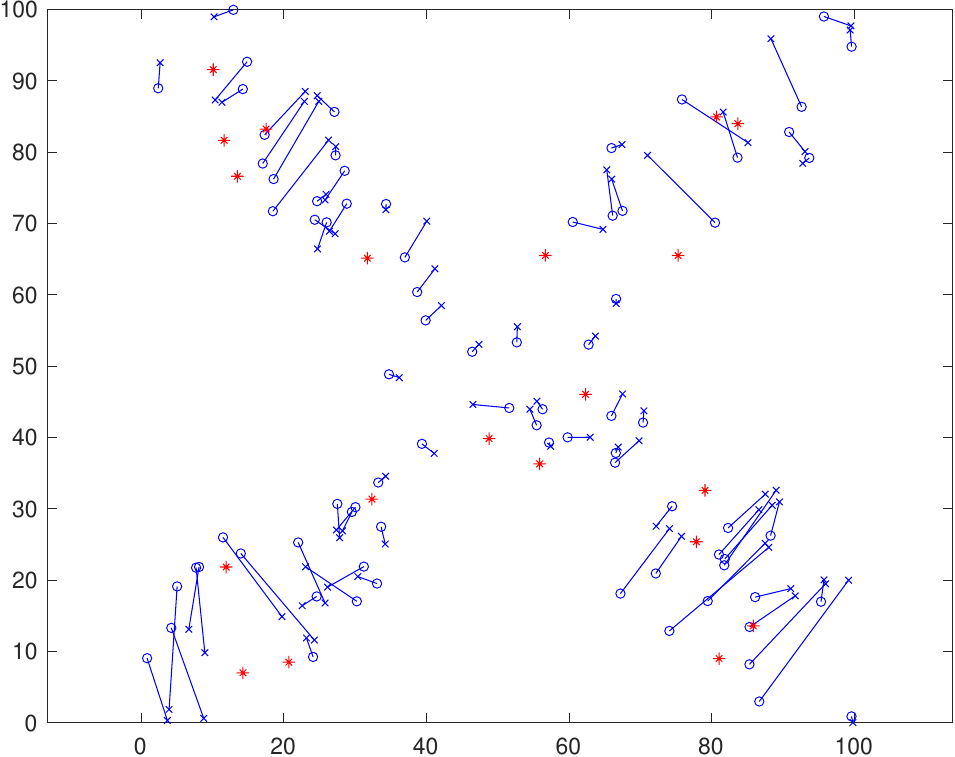}
    };

    \coordinate (horizon) at ($(img.north east) - (img.north west)$);
    \coordinate (vertical) at ($(img.north east) - (img.south east)$);
    \path (img.south west) -- (img.north east) coordinate[pos=0.5] (center)
      let \p1 = ($(img.north east) - (img.south west)$) in
      coordinate (imgwidth) at (\x1, 0)
      coordinate (imgheight) at (0, \y1);

    \foreach \col in {1,...,\nCols} {
        \node[anchor=south] at (\startX + \col*\imageWidth - \imageWidth/2, \startY + \nRows*\imageHeight - \imageHeight / 1.7) {\strut\pgfmathprint{\colLabels[\col-1]}};
        
    }
    
    \foreach \row in {1,...,\nRows} {
        \node[anchor=east,rotate=90] at (\startX , \startY + \nRows*\imageHeight - \row*\imageHeight +  \imageHeight/2.5) {\strut\pgfmathprint{\rowLabels[\row-1]}};
    }
\end{scope}
\end{tikzpicture}
\caption{The location error vector graphs for solutions from DEMN-DV-Hop and DCC when $R=25$, $N_{a} = 20$.
\protect\tikz[baseline=-0.5ex] \protect\node[star,star points=5,star point ratio=2.25,minimum size=1.2ex,inner sep=0pt,fill=red] at (0,0) {}; represents the anchor node, \protect\tikz[baseline=-0.5ex] \protect\draw[blue, thick] (0,0) circle (0.5ex); represents the real unknown node, \protect\tikz[baseline=-0.5ex] \protect\draw[blue, thick] (0.25ex,0.25ex) -- (-0.25ex,-0.25ex) (0.25ex,-0.25ex) -- (-0.25ex,0.25ex); represents the predicted unknown node, and the segment \protect\tikz[baseline=-0.5ex] \protect\draw[blue, thick] (0,0) -- (1ex,0); connects the real location and the predicted location, representing the location error.
The green boxes highlight example nodes where DCC has significant accuracy improvements.
}
\label{fig:sol_est_vs_gt}
\end{figure*}

The time complexities are evaluated for DCC and DEMN-DV-Hop. 
In the simulations, DEMN-DV-Hop uses efficient breadth-first search algorithm to compute the predicted hop-counts. On the other hand, DCC computes the predicted distances between nodes instead of the hop-counts.
Fig. \ref{fig:time_profiling} plots the total time cost of DCC and DEMN-DV-Hop to solve the problems in randomly distributed networks with varying $N_a$ and $R$. According to the plot, the total time cost is reduced by approximately 30\% to 40\% by using DCC.

\begin{figure}[h]
    \centering
    \includegraphics[width=0.28\linewidth]{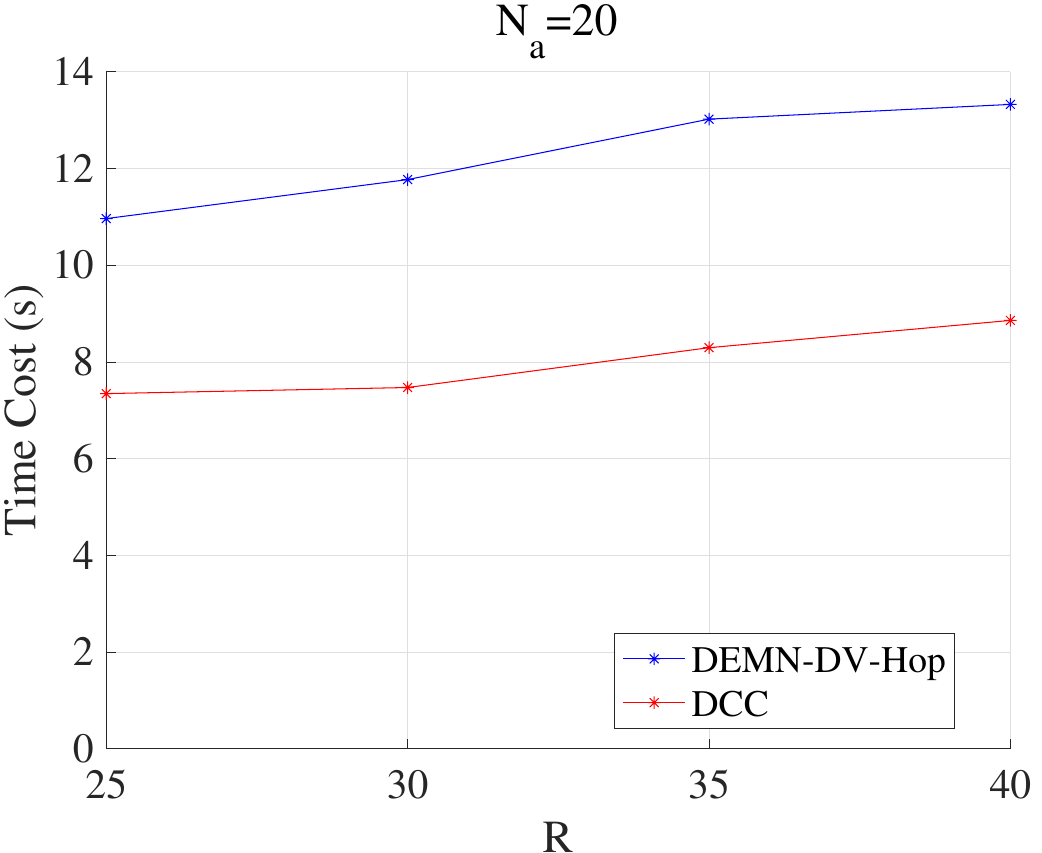}
    \includegraphics[width=0.28\linewidth]{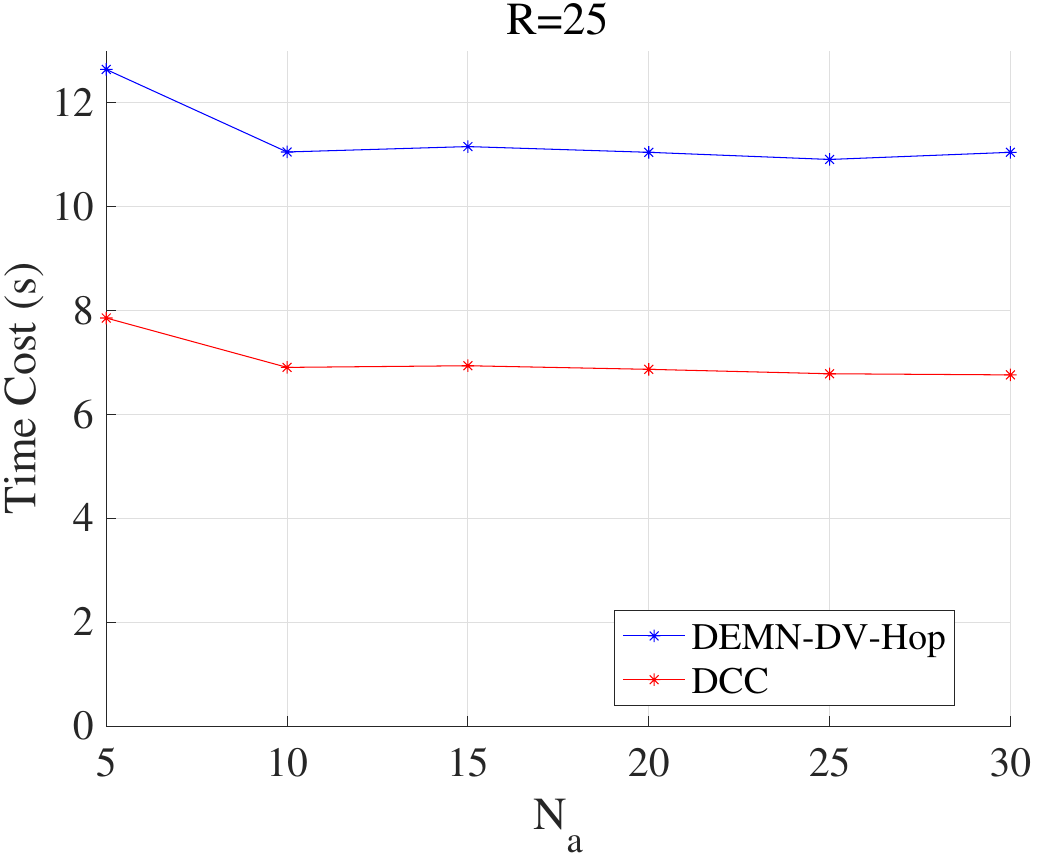}
    \caption{Time cost of DEMN-DV-Hop and DCC in randomly distributed networks. Left: $N_a=20$, $R=25, 30, 35, 40$. Right: $R = 25$, $N_a = 5, 10, 15, 20, 25, 30$.}
    \label{fig:time_profiling}
\end{figure}

\subsection{Ablation Study}
The ablation study is conducted in this subsection to assess the impact of $AC^{CC}$ and $IL^{DST}$ respectively. In the ablation study, we compare three multi-objective algorithms with same optimization settings. The three algorithms all have the first objective as the DEMN distance loss. They have different hop losses as the second objective:
\begin{enumerate}
    \item DEMN-DV-Hop: the hop loss objective is $HL^{base}$ defined in (\ref{eq:base_hl}).
    \item ACCC (activation condition based on connectivity consistency): the hop loss is
    \begin{equation}
        HL^{ACCC} = \sum_{i=1}^N \sum_{j=1}^N IL^{base}_{i,j}\cdot AC^{CC}_{i,j},
    \end{equation}
    where $IL^{base}_{i,j}$ is defined in (\ref{eq:il_base}) and $AC^{CC}_{i,j}$ is defined in (\ref{eq:ac_CC}).
    \item DCC: the hop loss is $HL^{DCC}$ defined in (\ref{eq:hl_ccdst}).
\end{enumerate}

The results of the ablation study for randomly distributed networks are in TABLE \ref{tab_position_error_random_ablation}. In most cases in the results, DCC has the best accuracy. This indicates the necessity of using $IL^{DST}_{i,j}$ to boost the performance. And because  using $AC^{CC}_{i,j}$ provides the necessary conditions to apply the continuous individual loss $IL^{DST}_{i,j}$, the ablation study demonstrates that both $AC^{CC}_{i,j}$ and $IL^{DST}_{i,j}$ are necessary in DCC.

\begin{table}[t]
	\renewcommand{\arraystretch}{1.01}
	\caption{Ablation study of the $MLEs$ (\%) in the random network.}
	\label{tab_position_error_random_ablation}
	\centering
	\begin{tabular}{ccccccccc}
		 \hline  \hline 
		 \multicolumn{1}{c}{$ {N}_{a} $} &  \multicolumn{4}{c}{5} &  \multicolumn{4}{c}{10} \\ 
		\cmidrule(lr){1-1}\cmidrule(lr){2-5}
		\cmidrule(lr){6-9}
		
		Communication radius ($ R $)	& 25 & 30 & 35 & 40 & 25 & 30 & 35 & 40  \\  \hline 
		DEMN-DV-Hop \cite{wang2024dv} & 80.82 & 76.98 & 59.77 & 43.95 & 19.59 & 16.60 & 17.77 & 14.53 \\
            ACCC & \textbf{61.85} & 61.40 & 39.08 & 29.19 & 23.17 & 16.51 & 17.11 & 13.99  \\
            DCC & 67.02 & \textbf{60.66} & \textbf{34.90} & \textbf{22.84} & \textbf{17.27} & \textbf{14.35} & \textbf{15.47} & \textbf{11.92}  \\
		
		 \hline  \hline 
		 \multicolumn{1}{c}{$ {N}_{a} $} &  \multicolumn{4}{c}{15} &  \multicolumn{4}{c}{20} \\ 
		\cmidrule(lr){1-1}\cmidrule(lr){2-5}
		\cmidrule(lr){6-9}
		
		Communication radius ($ R $)	& 25 & 30 & 35 & 40 & 25 & 30 & 35 & 40 \\  \hline 
		DEMN-DV-Hop \cite{wang2024dv} & 16.49 & 15.16 & 15.00 & 12.31 & 16.03 & 13.84 & 13.03 & 11.13 \\

            ACCC & 16.12 & 14.75 & 14.47 & 12.07 & 15.68 & 13.50 & 12.44 & 10.72 \\
            DCC & \textbf{14.12} & \textbf{13.02} & \textbf{13.38} & \textbf{10.55} & \textbf{13.80} & \textbf{11.95} & \textbf{10.77} & \textbf{9.26}  \\
		
		 \hline 
		 \hline 

   \multicolumn{1}{c}{$ {N}_{a} $} &  \multicolumn{4}{c}{25} &  \multicolumn{4}{c}{30} \\ 
		\cmidrule(lr){1-1}\cmidrule(lr){2-5}
		\cmidrule(lr){6-9}
		
		Communication radius ($ R $)	& 25 & 30 & 35 & 40 & 25 & 30 & 35 & 40 \\  \hline 
		DEMN-DV-Hop \cite{wang2024dv} & 14.01 & 12.29 & 11.83 & 10.36 & 14.02 & 12.45 & 11.45 & 9.95 \\

            ACCC  & 14.06 & 11.91 & 11.59 & 10.11 & 14.20 & 12.04 & 11.06 & 9.89 \\
            DCC &  \textbf{12.32} & \textbf{10.64} &  \textbf{9.66} & \textbf{8.62} & \textbf{12.24} & \textbf{10.81} & \textbf{9.14} & \textbf{8.37} \\
		
		 \hline 
		 \hline 
		
	\end{tabular}
\end{table}




\section{Conclusion}\label{sec:conclusion}

In this paper, we propose a novel modeling of hop loss as an objective in the DV-Hop localization algorithm to enhance the performance. We aim to improve the localization accuracy as well as the time complexity. We formulate the general hop loss by introducing two key factors: AC and IL. We propose $AC^{CC}_{i,j}$ as our AC based on the first order connectivity consistency, and propose $IL^{DST}_{i,j}$ as our IL based on the distances between nodes. Combining them, we propose the new hop loss function $HL^{DCC}$. We prove that $AC^{CC}_{i,j}$ is capable of catching all hop errors in the predictions, and discuss the advantages of using a continuous individual loss. The evaluation results demonstrate that the proposed algorithm has improved accuracy compared with other DV-Hop based algorithms. And it has a significant time complexity improvement compared with DEMN-DV-Hop, the baseline algorithm using hop loss.

In future work, we will study other modelings of AC and IL. Particularly, AC can be designed as continuous weights instead of binary values, therefore more intelligently determine the importance of node pairs.


\FloatBarrier

\needspace{3\baselineskip}

\bibliography{sn-bibliography}

\end{document}